\documentclass[12pt]{article}
\usepackage[english]{babel}
\usepackage[utf8]{inputenc}
\usepackage[T1]{fontenc}
\usepackage{listings}
\usepackage{amsbsy}
\usepackage{amsfonts}
\usepackage{amsmath}
\usepackage{amssymb}
\usepackage{amsthm}
\usepackage{graphicx} 
\usepackage[pdftex]{color}
\usepackage{dsfont}
\usepackage{enumerate}
\usepackage{float}
\usepackage{calc, color, setspace}
\usepackage{fullpage}
\usepackage{indentfirst}
\usepackage{longtable}
\usepackage{multirow}
\usepackage{natbib}
\usepackage{xr-hyper}
\usepackage{hyperref,bookmark}
\usepackage{mathtools}

\usepackage{listings}
\usepackage{fancyvrb}
\usepackage{tcolorbox}

\newtheorem{theorem}{Theorem}[section]
\newtheorem{definition}{Definition}[section]

\newtheorem{remark}[theorem]{Remark}

\hypersetup{
  colorlinks=true,
  linkcolor=red,
  citecolor=blue,
  filecolor=red,
  urlcolor=teal,
}
\newcommand{\blind}{1}

\begin{document}

\def\spacingset#1{\renewcommand{\baselinestretch}%
{#1}\small\normalsize} \spacingset{1}


\if1\blind
{
  \title{\bf  Generalised Linear Models Driven by Latent Processes: Asymptotic Theory and Applications
  }
  \author{Wagner Barreto-Souza$^\star$\footnote{E-mail: \textcolor{teal}{\texttt{wagner.barreto-souza@ucd.ie}}}\,\,\, and\, Ngai Hang Chan$^\sharp$\footnote{E-mail: \textcolor{teal}{\texttt{nhchan@cityu.edu.hk}}}\hspace{.2cm}\\
    {\it \normalsize $^\star$School of Mathematics and Statistics, University College Dublin, Dublin 4, Republic of Ireland}\\
    {\it \normalsize $^\sharp$Department of Biostatistics, City University of Hong Kong, Hong Kong}}
  \maketitle
} \fi
\if0\blind
{
  \bigskip
  \bigskip
  \bigskip
  \begin{center}
    {\LARGE\bf Generalised Linear Models Driven by Latent Processes: Asymptotic Theory and Applications}
\end{center}
  \medskip
} \fi

\bigskip
\addtocontents{toc}{\protect\setcounter{tocdepth}{1}}

\begin{abstract}

This paper introduces a class of generalised linear models (GLMs) driven by latent processes for modelling count, real-valued, binary, and positive continuous time series. Extending earlier latent-process regression frameworks based on Poisson or one-parameter exponential family assumptions, we allow the conditional distribution of the response to belong to a bi-parameter exponential family, with the latent process entering the conditional mean multiplicatively. This formulation substantially broadens the scope of latent-process GLMs, for instance, it naturally accommodates gamma responses for positive continuous data, enables estimation of an unknown dispersion parameter via method of moments, and avoids restrictive conditions on link functions that arise under existing formulations. We establish the asymptotic normality of the GLM estimators obtained from the GLM likelihood that ignores the latent process, and we derive the correct information matrix for valid inference. In addition, we provide a principled approach to prediction and forecasting in GLMs driven by latent processes, a topic not previously addressed in the literature. We present two real data applications on measles infections in North Rhine-Westphalia (Germany) and paleoclimatic glacial varves, which highlight the practical advantages and enhanced flexibility of the proposed modelling framework.

\end{abstract}

{\it \textbf{Keywords}:} Asymptotic inference, Bi-parameter exponential family, Latent process models, Generalised linear models, Time series regression.

\section{Introduction}\label{intro}

In a pioneer work,  \cite{zeg1988} introduced a regression model for counts driven by a weakly stationary latent process. Only the two first conditional moments (given the latent process) were specified, and a quasi-likelihood estimation procedure was considered for inferential purposes. In a similar vein but assuming a conditional Poisson distribution, \cite{davetal2000} explored a Poisson count time series $\{Y_t\}$ defined as 
\begin{eqnarray}\label{poisson}
	Y_t|\nu_t\sim\mbox{Poisson}(\mu_t\nu_t),\quad \mu_t=\exp({\bf x}_t^\top\boldsymbol\beta),
\end{eqnarray}	  
where $\nu_t$ is a log-normal AR(1) process with mean 1 and some finite variance, ${\bf x}_t$ is a vector of covariates, and $\boldsymbol\beta$ is the associated vector of regression coefficients. Among the asymptotic properties explored by the authors, they established conditions to ensure asymptotic normality of the GLM estimators based on the Poisson likelihood by ignoring the latent process. The model (\ref{poisson}) was also previously studied by \cite{chaled1995}, where a Monte Carlo EM algorithm was developed for estimation of the parameters. A two-step estimation procedure with the regression parameters being estimated from the marginal distribution, and the parameters of the latent process estimated via a composite likelihood approach, was proposed by \cite{sor2019}.
  
\cite{davwu2009} extended model  (\ref{poisson}) by replacing the Poisson assumption by the one-parameter exponential family (EF), with focus on the negative binomial case (with known dispersion parameter). The extended model assumes that $Y_t$ given a stationary strongly mixing latent process $\nu_t$ follows an one-parameter exponential family with conditional mean 
\begin{eqnarray}\label{ef_dav}
E(Y_t|\nu_t)=h({\bf x}_t^\top\boldsymbol\beta+\nu_t),
\end{eqnarray}  
where $h(\cdot)$ is the inverse of a link function such that the resulting GLM likelihood is concave and $E(h({\bf x}_t^\top\boldsymbol\beta+\nu_t))=h({\bf x}_t^\top\boldsymbol\beta)$. Under the above formulation, such a function exists, for instance, for Poisson, negative binomial, and Gaussian cases. The asymptotic normality of the GLM estimators was established based on the GLM likelihood by ignoring the latent process.
  
Other related recent contributions include \cite{maiaetal2021}, where a class of semiparametric time series was proposed by using quasi-likelihood models driven by latent processes, and \cite{baromb2022}, where a Poisson regression driven by a gamma AR(1) process was developed with a composite likelihood inferential approach.

Our chief goal in this paper is to introduce a flexible GLM driven by latent processes to handle counts, real-valued, continuous, positive continuous time series. To do this, we assume that the time series of interest, say $\{Y_t\}$, conditional on a latent process $\{\nu_t\}$, follows a bi-parameter exponential family, with a conditional mean having the latent process in a multiplicative way, and with a dispersion parameter, which can be unknown. We establish the asymptotic normality of the GLM estimators and provide the the correct information matrix to assess standard errors of the parameter estimates. Our formulation have some advantages when compared to the \cite{davwu2009}'s model as follows: (i) by entering the latent effect in a multiplicative way instead of additive like in (\ref{ef_dav}), it allows us to handle a gamma time series to address positive continuous time series, which is not feasible under the existing approach (more details on that are provided in Remark \ref{rem:comparison}); (ii) we consider a dispersion parameter that can be estimated based on the method of moments, so adding more flexibility; (iii) if the function $h(\cdot)$ is not exponential, then a multiplicative latent effect form is not possible and therefore it is hard to ensure the requirements by \cite{davwu2009} to establish the asymptotics for the GLM estimators such as  $E(h({\bf x}_t^\top\boldsymbol\beta+\nu_t))=h({\bf x}_t^\top\boldsymbol\beta)$ and their condition stated in Theorem 3 (weak convergence of their $C_n(s)$ quantity); under our formulation, these issues are easily addressed; (iv) we also address how to predict based on GLMs driven by latent processes, which is not addressed in the current literature;
(v) we consider different latent processes in our numerical results, while a log-normal AR(1) assumption has been usually adopted in the literature; for example, our empirical illustrations show that a gamma AR(1) latent process might provide better results when compared to the log-normal AR(1) process.

This paper is organised as follows. Section \ref{sec:modeldef} introduces our class of GLMs driven by latent processes and provide some basic results. Section \ref{sec:asymptotics} establishes the asymptotic normality of the GLM estimators based on the Central Limit Theorem for strongly mixing processes by \cite{pelute1997}, and explicitly provide the correct information matrix to assess the standard errors for valid inference. Prediction and forecasting are addressed in Section \ref{sec:prediction}. Section \ref{sec:applications} presents two real data applications on measles infection cases in North Rhine-Westphalia (Germany) and paleoclimatic glacial varves  time series, which illustrates the Poisson and gamma GLM time series models' performance in practice. Concluding remarks are presented in Section \ref{sec:conclusion}.


\section{Model specification}\label{sec:modeldef}

We start this section by introducing key ingredients to define our class of generalised linear models driven by latent processes.
Consider a random variable $Y$ with distribution belonging to the exponential family (EF) and density/probability function given by
\begin{eqnarray*}
\pi(y)=\exp\left\{\dfrac{\theta y-b(\theta)}{\phi}+c(y;\phi)\right\},\quad y\in\mathbb S,	
\end{eqnarray*}	
where $b(\cdot)$ is assumed to be twice differentiable, $\theta$ is a real-valued parameter, $\phi>0$ is a dispersion parameter, and $\mathbb S$ is the support of the distribution. It is well-know from the EF theory that $\mu\equiv E(Y)=b'(\theta)$ and $\mbox{Var}(Y)=\phi b''(\theta)=\phi V(\mu)$, where $b'(\cdot)$ and $b''(\cdot)$ denote respectively the first and second order derivatives of $b(\cdot)$, and $V(\cdot)$ denotes the variance function. The parameter $\theta$ can be expressed in terms of $\mu$,  say $\theta=g(\mu)$. We will also adopt the following notation $Y\sim\mbox{EF}(\mu,\phi)$. We now describe the latent stationary and strongly mixing processes that will be considered in our numerical experiments.\\

\noindent {\bf Log-normal AR(1)}. Define $\{\nu_t\}_{t\in\mathbb N}$ by $\nu_t=\exp(Z_t)$, where $\{Z_t\}_{t\in\mathbb N}$ be an AR(1) process with $N(-\sigma^2/2,\sigma^2)$ marginals and AR parameter $\rho\in(-1,1)$. Then, $\nu_t$ has log-normal marginals, $E(\nu_t)=1$, $\mbox{Var}(\nu_t)=\exp(\sigma^2)-1$, and $\mbox{corr}(\nu_{t+l},\nu_t)=\dfrac{\exp(\sigma^2\rho^l)-1}{\exp(\sigma^2)-1}$, for $l\geq1$.\\

\noindent {\bf Gamma AR(1)}.  The sequence $\{\nu_t\}_{t\in\mathbb N}$ follows a gamma AR(1) (in short GAR) process by \cite{sim1990} with mean 1 and variance $\sigma^2>0$ if the conditional density function of $\nu_t$ given  $\nu_{t-1}$ is given by
\begin{eqnarray}\label{gar(1)}
f(\nu_t|\nu_{t-1})=\dfrac{1}{\sigma^2(1-\rho)}\left(\dfrac{\nu_t}{\rho\nu_{t-1}}\right)^{\frac{\sigma^{-2}-1}{2}}\exp\left(-\dfrac{\nu_t+\rho\nu_{t-1}}{\sigma^2(1-\rho)}\right)\mathcal I_{\sigma^{-2}-1}\left(2\dfrac{\sqrt{\rho\nu_t\nu_{t-1}}}{\sigma^2(1-\rho)}\right),
\end{eqnarray}
where $\rho\in(0,1)$ controls the time dependence, and $\mathcal I_{u}(x)=\displaystyle\sum_{k=0}^\infty\dfrac{(x/2)^{2k+u}}{\Gamma(k+u+1)k!}$ is the modified Bessel function of the first kind of order $u\in\mathbb R$. Also, the GAR process is strictly stationary having gamma marginal distributions with shape and scale parameters equal to $\sigma^{-2}$, and autocorrelation function given by $\mbox{corr}(\nu_{t+l},\nu_t)=\rho^l$, for $l\geq1$.\\ 

\noindent {\bf Squared ARCH(1)}.  Let $\{Z_t\}_{t\in\mathbb N}$ be an ARCH(1) process \citep{eng1982}, that is 
\begin{eqnarray}\label{arch}
Z_t&=&\sqrt{\omega+\rho Z_{t-1}^2}\epsilon_t, 
\end{eqnarray}
where $\omega>0$, $\rho\in(0,1)$, and $\{\epsilon_t\}_{t\in\mathbb N}\stackrel{iid}{\sim}N(0,1)$. Define the squared ARCH(1) process by $\nu_t=Z_t^2$, for $t\in\mathbb N$, with $E(\nu_t)=1$. This will give us that $\omega=1-\rho$ and will be in force for what follows. This condition will later ensure that the model parameters can be consistently estimated. Moreover, we assume $\rho\in(0,1/\sqrt{3})$ so that the second moment of $\nu_t$ is finite and given by $E(\nu_t^2)=3\dfrac{1-\rho^2}{1-3\rho^2}$. In this case, the autocorrelation function of $\nu_t$ is $\mbox{corr}(\nu_{t+l},\nu_t)=\rho^l$, for $l\geq1$.\\

All ingredients have been established to define the model. Let  $\{\nu_t\}_{t\in\mathbb N}$ be a latent process with $E(\nu_t)=1$ and $E(\nu^2_t)<\infty$. We consider a time series $\{Y_t\}_{t\in\mathbb N}$ with conditional distribution given $\{\nu_t\}_{t\in\mathbb N}$ belonging to the exponential family:
\begin{eqnarray}\label{glm1}
	Y_t|\nu_t\sim \mbox{EF}(\mu_t\nu_t,\phi),
\end{eqnarray}	
with  
\begin{eqnarray}\label{glm2}
	\mu_t\equiv h({\bf x}_{nt}^\top\boldsymbol{\beta}),
\end{eqnarray}	
where $h(\cdot)$ is the inverse of a canonical link function, ${\bf x}_{n\,t}$ is a $p\times 1$ vector of covariates that might depend on the sample size $n$, $\boldsymbol\beta$ is its associated coefficient vector with dimension $p\times 1$, and $\phi>0$ is a dispersion/precision parameter.

\begin{remark}\label{rem:comparison}
Our formulation (\ref{glm1}) and (\ref{glm2}) is different from that considered by \cite{davwu2009} (see Equation \ref{ef_dav}). We enter the latent process into the model in a multiplicative way. Our formulation and \cite{davwu2009} ones will coincide only when the log link function is considered. Under (\ref{glm1}) and (\ref{glm2}), we are able to handle continuous positive time series based on, for example, the gamma and inverse Gaussian distributions since the conditions on the link function described below Equation (\ref{ef_dav}) are satisfied. On the other hand, such conditions for the gamma and inverse Gaussian cases are complicated (if not impossible) to be checked under the formulation by \cite{davwu2009}.
\end{remark}

The expected value of our time series is $$E(Y_t)=E[E(Y_t|\nu_t)]=h({\bf x}_{nt}^\top\boldsymbol{\beta})E(\nu_t)=h({\bf x}_{nt}^\top\boldsymbol{\beta})=\mu_t$$ since we are imposing that $E(\nu_t)=1$. By assuming that the variance function of the GLM assumes the form $V(\mu)=\mu^\gamma$, for $\gamma\geq0$, and that $E(\nu_t^{\max(\gamma,2)})<\infty$, we obtain that the variance of $Y_t$ is
\begin{eqnarray*}
\mbox{Var}(Y_t)&=&E[\mbox{Var}(Y_t|\nu_t)]+\mbox{Var}[E(Y_t|\nu_t)]=E[\phi V(\mu_t \nu_t)]+\mbox{Var}(\mu_t \nu_t)\\
&=&\phi \mu_t^\gamma E(\nu_t^\gamma)+\mu_t^2\mbox{Var}(\nu_t)=\phi \mu_t^\gamma \kappa_{\gamma}+\mu_t^2(\kappa_2-1),
\end{eqnarray*}
where $\kappa_j\equiv E(\nu_t^j)$ for $j>0$. For $l\in\mathbb N$, the covariance function is 
\begin{eqnarray*}
\mbox{cov}(Y_{t+l},Y_t)&=&\mbox{cov}(E(Y_{t+l}|\nu_{t+l}),E(Y_t|\nu_t))+E(\mbox{cov}(Y_{t+l},Y_t|\nu_{t+l},\nu_t))\\
&=&\mu_{t+l}\mu_t\mbox{cov}(\nu_{t+l},\nu_t)+0=\mu_{t+l}\mu_t\mbox{cov}(\nu_{t+l},\nu_t),
\end{eqnarray*}
and the autocorrelation function assumes the form
\begin{eqnarray*}
	\mbox{corr}(Y_{t+l},Y_t)&=&\frac{\mu_{t+l}\mu_t\mbox{cov}(\nu_{t+l},\nu_t)}{\sqrt{[\phi \mu_{t+l}^\gamma \kappa_{\gamma}+\mu_{t+l}^2(\kappa_2-1)][\phi \mu_t^\gamma \kappa_{\gamma}+\mu_t^2(\kappa_2-1)]}}.
\end{eqnarray*}

Explicit expressions for the variance and autocovariance/autocorrelation function of $Y_t$ are obtained by using the moments and autocovariance/autocorrelation function of the latent processes described above.


\section{Asymptotic distribution of the GLM estimators}\label{sec:asymptotics}

We now establish the asymptotic distribution of the estimators based on a GLM pseudo log-likelihood function for the GLMs driven by latent processes. The estimation of $\phi$ and the parameters related to the latent process will be performed via method of moments and will be illustrated in our real data applications in Section \ref{sec:applications}.

The pseudo log-likelihood function under a GLM with  canonical link function (hence $\theta_t=g({\bf x}_{n\,t}^\top\boldsymbol\beta)={\bf x}_{n\,t}^\top\boldsymbol\beta$) is given by
\begin{eqnarray*}
\ell(\boldsymbol\beta)=\sum_{t=1}^n\left\{[{\bf x}_{nt}^\top\boldsymbol\beta Y_t-b({\bf x}_{nt}^\top\boldsymbol\beta)]/\phi+c(Y_t;\phi)\right\}.
\end{eqnarray*}

The GLM estimator of $\boldsymbol\beta$, say $\widehat{\boldsymbol\beta}$, is obtained by $\widehat{\boldsymbol\beta}=\mbox{argmax}_{\boldsymbol\beta}\ell(\boldsymbol\beta)$, which does not depend on $\phi$. Since the canonical link function is considered, $\ell(\cdot)$ is concave, and this ensures the existence and uniqueness of $\widehat{\boldsymbol\beta}$. To show the asymptotic normality of the GLM estimators, we start by showing the asymptotic normality of the quantity
\begin{eqnarray}\label{eq:key}
n^{-1/2}\sum_{t=1}^n{\bf x}_{nt}\mu_t^0(\nu_t-1), 
\end{eqnarray}	
which is a key result to establish the asymptotic distribution of $\sqrt{n}(\widehat{\boldsymbol\beta}-\boldsymbol\beta_0)$, where $\boldsymbol\beta_0$ is the true value of $\boldsymbol\beta$ and $\mu_t^0=h({\bf x}_{nt}^\top\boldsymbol\beta_0)$.

To do this, we need the definition of $\alpha$-mixing (strongly mixing) processes and a Central Limit Theorem (CLT) by \cite{pelute1997}.

\begin{definition}\label{alpha-mix}[$\alpha$-mixing]  Let $\{X_t\}_{t\in\mathbb N}$ be a stationary process and denote the $\sigma$-algebra $\mathcal F_n^m=\sigma(X_n,X_{n+1},\ldots,X_m)$, for $n\leq m$. Define $\alpha(n)=\sup_k\sup_{A\in\mathcal F_1^k, B\in\mathcal F_{k+n}^\infty}|P(A\cap B)-P(A)P(B)|$. We say that $\{X_t\}_{t\in\mathbb N}$  is $\alpha$-mixing (strongly mixing) if $\alpha(n)\rightarrow0$ as $n\rightarrow\infty$.
\end{definition}

Let $\{a_{nt}\}_{n,t\in\mathbb N}$ be a triangular array of real numbers. Theorem 2.2(c) from \cite{pelute1997} states that a centred stochastic sequence $\{\zeta_t\}_{t\in\mathbb N}$ satisfies the weak convergence $\displaystyle\sum_{t=1}^na_{nt}\zeta_t\stackrel{d}{\longrightarrow}N(0,1)$ as $n\rightarrow\infty$ if the following conditions hold:

(C1) $\{\zeta_t\}_{t\in\mathbb N}$ is strongly mixing, and for certain $\delta>0$, $\{|\zeta_t|^{2+\delta}\}_{t\in\mathbb N}$ is uniformly integrable, $\inf_t\mbox{Var}(\zeta_t)>0$ and $\displaystyle\sum_{t=1}^\infty t^{2/\delta}\alpha(t)<\infty$;

(C2) $\{\zeta^2_t\}_{t\in\mathbb N}$ is an uniformly integrable family and $\mbox{Var}\left(\displaystyle\sum_{t=1}^na_{nt}\zeta_t\right)=1$;
and

(C3) $\sup_n\displaystyle\sum_{t=1}^n a_{nt}^2<\infty$ and $\max_{1\leq t\leq n}|a_{nt}|\rightarrow0$ as $n\rightarrow\infty$.\\

	
We also need to impose some conditions on the covariates to derive the asymptotic properties for the GLM estimators. The conditions below have been considered by \cite{davetal2000} and \cite{davwu2009}.\\

\noindent {\bf Assumptions on covariates.} Denote by $\boldsymbol\beta_0$ the true value of $\boldsymbol\beta$ and $\mu_t\equiv h({\bf x}_{nt}^\top\boldsymbol\beta_0)$.

\noindent  There exists a non-singular matrix $\boldsymbol\Omega_I$ such that
\begin{eqnarray}\label{as:b2}
	\lim_{n\rightarrow\infty}n^{-1}\sum_{t=1}^n{\bf x}_{nt}{\bf x}_{nt}^\top V(\mu_t^0)\equiv\boldsymbol\Omega_I.
\end{eqnarray}	
There exists a matrix $\boldsymbol\Omega^\dag_I$ such that the following convergence in probability holds as $n\rightarrow\infty$:
\begin{eqnarray}\label{as:varfunc}
	n^{-1}\sum_{t=1}^n{\bf x}_{nt} {\bf x}_{nt}^\top V(\mu_t^0\nu_t)\stackrel{p}{\longrightarrow}\boldsymbol\Omega^\dag_I.
\end{eqnarray}	
As $n\rightarrow\infty$, we assume that 
\begin{eqnarray}\label{as:mutmulcentre}
	n^{-1}\sum_{t=1}^n{\bf x}_{nt} {\bf x}_{nt}^\top\mu_t^0\mu_{t+l}^0\longrightarrow {\bf W}_l,
\end{eqnarray}	
uniformly in $|l|<n$,
\begin{eqnarray}\label{as:mutmulleft}
	n^{-1}\sum_{t=1}^{-l}{\bf x}_{nt} {\bf x}_{nt}^\top\mu_t^0\mu_{t+l}^0\longrightarrow 0,
\end{eqnarray}	
for each $l<0$, and
\begin{eqnarray}\label{as:mutmulright}
	n^{-1}\sum_{t=n-l+1}^n{\bf x}_{nt} {\bf x}_{nt}^\top\mu_t^0\mu_{t+l}^0\longrightarrow 0,
\end{eqnarray}	
for each $l>0$. Furthermore, it is assumed that left-hand sides of (\ref{as:mutmulleft}) and (\ref{as:mutmulright}) are uniformly bounded in $l\in(-n,0)$ and $l\in(0,n)$, respectively, as $n\rightarrow\infty$. The last assumption on the covariates is that
\begin{eqnarray}\label{as:sup}
	\sup_{1\leq t\leq n}n^{-1/2}|{\bf x}_{nt}\mu_t^0|\rightarrow0,
\end{eqnarray}	
as $n\rightarrow\infty$.

In the next theorem, we provide the asymptotic distribution of (\ref{eq:key}).

\begin{theorem}\label{thm:aux}
	Assume that $\{\nu_t\}_{t\in\mathbb N}$ is a stochastic process satisfying (C1) and that the assumptions on covariates (\ref{as:mutmulcentre})-(\ref{as:sup}) hold. Then, the following asymptotic normality is valid:
	\begin{eqnarray*}
		n^{-1/2}\sum_{t=1}^n{\bf x}_{nt}\mu_t^0(\nu_t-1)\stackrel{d}{\longrightarrow}N\left({\bf 0},{\boldsymbol\Omega}_{II}\right),
	\end{eqnarray*}	
	as $n\rightarrow\infty$, where $\boldsymbol\beta_0$ denotes the true value of $\boldsymbol\beta$ and $\boldsymbol\Omega_{II}=\displaystyle\sum_{l=-\infty}^\infty\gamma_\nu(l){\bf W}_l$,
	with $\gamma_\nu(\cdot)$ being the autocovariance function of $\nu_t$.
\end{theorem}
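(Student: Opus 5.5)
We reduce to the scalar case by the Cramér--Wold device. Fix ${\bf c}\in\mathbb R^p$ with ${\bf c}^\top\boldsymbol\Omega_{II}{\bf c}>0$ (if it equals zero, the argument below shows the variance tends to zero and the limit is degenerate, which is consistent with the claim). Set $S_n={\bf c}^\top n^{-1/2}\sum_{t=1}^n{\bf x}_{nt}\mu_t^0(\nu_t-1)=\sum_{t=1}^n b_{nt}\zeta_t$, where $\zeta_t=\nu_t-1$ and $b_{nt}=n^{-1/2}{\bf c}^\top{\bf x}_{nt}\mu_t^0$. We will apply Theorem 2.2(c) of \cite{pelute1997}, that is, conditions (C1)--(C3), to the normalised weights $a_{nt}=b_{nt}/s_n$, where $s_n^2=\mbox{Var}(S_n)$. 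This gives $S_n/s_n\stackrel{d}{\longrightarrow}N(0,1)$. Combined with $s_n^2\to{\bf c}^\top\boldsymbol\Omega_{II}{\bf c}$ and Slutsky's lemma, it yields $S_n\stackrel{d}{\longrightarrow}N(0,{\bf c}^\top\boldsymbol\Omega_{II}{\bf c})$ for every ${\bf c}$, which is the theorem.

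\textbf{Variance computation.} By stationarity,
\begin{eqnarray*}
s_n^2=\sum_{|l|<n}\gamma_\nu(l)\,{\bf c}^\top\Big(n^{-1}\sum_{t}{\bf x}_{nt}{\bf x}_{n,t+l}^\top\mu_t^0\mu_{t+l}^0\Big){\bf c},
\end{eqnarray*}
where the inner sum runs over $t$ with $1\le t,t+l\le n$. We identify the inner matrix with the quantity in (\ref{as:mutmulcentre}), reading the covariate assumptions as the usual cross-products. The edge terms removed by restricting the range of $t$ are exactly the sums in (\ref{as:mutmulleft}) and (\ref{as:mutmulright}), and these vanish for each fixed $l$. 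So for each fixed $l$ the summand converges to $\gamma_\nu(l){\bf c}^\top{\bf W}_l{\bf c}$.

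To pass to the limit under the infinite sum, we use dominated convergence. The inner matrices are uniformly bounded, by the uniform convergence in (\ref{as:mutmulcentre}) together with the uniform boundedness of the edge terms. It therefore suffices that $\sum_l|\gamma_\nu(l)|<\infty$. This follows from Davydov's covariance inequality for strongly mixing sequences, $|\gamma_\nu(l)|\le C\alpha(l)^{\delta/(2+\delta)}\|\zeta\|_{2+\delta}^2$, together with the summability $\sum t^{2/\delta}\alpha(t)<\infty$ in (C1). That condition forces $\alpha(t)=o(t^{-1-2/\delta})$, and hence $\alpha(t)^{\delta/(2+\delta)}=o(t^{-1})$ up to the needed margin.

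I expect this step to be the main obstacle. The rate that (C1) gives is borderline: $(1+2/\delta)\cdot\delta/(2+\delta)=1$ exactly. So I would first try to use monotonicity of $\alpha$ with the summability to get $\sum_l \alpha(l)^{\delta/(2+\delta)}<\infty$ directly. If that fails, I would invoke the variance bound inside Peligrad--Utev's own proof, which holds under exactly (C1). The same step gives the well-definedness of $\boldsymbol\Omega_{II}$.

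\textbf{Checking (C1)--(C3).} (C1) holds by hypothesis for $\zeta_t=\nu_t-1$, since centring preserves mixing coefficients, uniform integrability, and the variance. (C2) follows from (C1): by stationarity, $\{\zeta_t^2\}$ is uniformly integrable because $E|\zeta_t|^{2+\delta}<\infty$, and $\mbox{Var}(\sum a_{nt}\zeta_t)=1$ holds by construction of $a_{nt}$. For (C3), $\sum_t a_{nt}^2=s_n^{-2}n^{-1}\sum_t({\bf c}^\top{\bf x}_{nt}\mu_t^0)^2$. The numerator converges to ${\bf c}^\top{\bf W}_0{\bf c}$ by (\ref{as:mutmulcentre}) with $l=0$, and $s_n^2$ is bounded away from zero for large $n$, so this sum is bounded. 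Finally, $\max_t|a_{nt}|\le s_n^{-1}\|{\bf c}\|\sup_t n^{-1/2}|{\bf x}_{nt}\mu_t^0|\to0$ by (\ref{as:sup}). The CLT then applies, and the Cramér--Wold device finishes the proof.
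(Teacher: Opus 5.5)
Your proposal is correct and follows the paper's route. Like the paper, you reduce to ${\bf c}^\top(\cdot)$, compute the variance by re-indexing with $l=t-k$ and using (\ref{as:mutmulcentre})--(\ref{as:mutmulright}), and then check (C1)--(C3) of \cite{pelute1997} for the normalised weights; your dominated-convergence and Slutsky/Cram\'er--Wold steps make explicit what the paper leaves implicit. On the step you flagged, monotonicity does not rescue Davydov's bound (e.g.\ $\alpha(t)=t^{-1-2/\delta}(\log t)^{-s}$ with $1<s\le 1+2/\delta$), but Rio's covariance inequality $|\gamma_\nu(l)|\le 2\int_0^{2\alpha(l)}Q^2(u)\,du$, with $Q$ the quantile function of $|\nu_0-1|$, summed over $l$ and combined with H\"older's inequality (exponents $(2+\delta)/\delta$ and $(2+\delta)/2$), gives $\sum_l|\gamma_\nu(l)|<\infty$ under exactly the moment and rate conditions in (C1), which is the fact your fallback needs.
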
	

\begin{proof}
Define $C_n({\bf s})={\bf s}^\top n^{-1/2}\displaystyle\sum_{t=1}^n{\bf x}_{nt}\mu_t^0(\nu_t-1)$ and $\tau^2_n({\bf s})=\mbox{Var}(C_n({\bf s}))$. It follows that
\begin{eqnarray*}
\tau^2_n({\bf s})&=&{\bf s}^\top n^{-1}\sum_{t=1}^n\sum_{k=1}^n{\bf x}_{nt}\mu_t^0{\bf x}_{nk}^\top\mu_k^0\mbox{cov}(\nu_t,\nu_k){\bf s}\\
&=&{\bf s}^\top n^{-1}\sum_{t=1}^n\sum_{k=1}^n{\bf x}_{nt}{\bf x}_{nk}^\top\mu_t^0\mu_k^0\gamma_\nu(t-k){\bf s},
\end{eqnarray*}	
By making a change of index $l=t-k$ and using Assumptions (\ref{as:mutmulcentre}), (\ref{as:mutmulleft}), and (\ref{as:mutmulright}), we obtain that
\begin{eqnarray*}
\lim_{n\rightarrow\infty}\tau^2_n({\bf s})={\bf s}^\top\sum_{l=-\infty}^\infty\gamma_\nu(l){\bf W}_l{\bf s}={\bf s}^\top{\boldsymbol\Omega}_{II}{\bf s}.
\end{eqnarray*}	

We now define $\zeta_t=\nu_t-1$ and $a_{nt}=\dfrac{1}{\tau_n({\bf s})}n^{-1/2}{\bf s}^\top{\bf x}_{nt}\mu_t^0$. Condition (C1) from the CLT for strongly mixing processes \citep{pelute1997} is a hypothesis of our theorem. Also, $\mbox{Var}\left(\displaystyle\sum_{t=1}^na_{nt}\zeta_t\right)=\dfrac{1}{\tau^2_n({\bf s})}\mbox{Var}(C_n({\bf s}))=\dfrac{1}{\tau^2_n({\bf s})}\tau^2_n({\bf s})=1$. Therefore, Condition (C2) is satisfied. Moreover, 
\begin{eqnarray*}
\sum_{t=1}^na_{nt}^2=\dfrac{1}{\tau^2_n({\bf s})}{\bf s}^\top n^{-1}\sum_{t=1}^n{\bf x}_{nt}{\bf x}_{nt}^\top(\mu_t^0)^2{\bf s}\longrightarrow\dfrac{{\bf s}^\top{\bf W}_0{\bf s}}{{\bf s}^\top{\bf W}{\bf s}},
\end{eqnarray*}	
as $n\rightarrow\infty$, and
\begin{eqnarray*}
	\max_{1\leq t\leq n}|a_{nt}|=\dfrac{1}{\tau_n({\bf s})}\max_{1\leq t\leq n}|{\bf s}^\top n^{-1/2}{\bf x}_{nt}\mu_t^0|\leq
	\dfrac{1}{\tau_n({\bf s})}|{\bf s}|^\top \max_{1\leq t\leq n}n^{-1/2}| {\bf x}_{nt}\mu_t^0|\longrightarrow0,
\end{eqnarray*}	
as $n\rightarrow\infty$, where we used that $\lim_{n\rightarrow\infty}\tau_n({\bf s})=\sqrt{{\bf s}^\top{\bf W}{\bf s}}$ and Assumption (\ref{as:sup}) to obtain the null limit. These results give us that Condition (C3) holds. Since all conditions from Theorem 2.2(c) from \cite{pelute1997} hold, we obtain the claimed asymptotic normality.
\end{proof}

\begin{theorem}
	Suppose that $\{\nu_t\}_{t\in\mathbb N}$ is a stochastic process satisfying (C1) and that the covariates satisfy Assumptions (\ref{as:b2})-(\ref{as:sup}). Then, the GLM estimator of $\boldsymbol\beta$ is consistent and satisfies the asymptotic normality
	\begin{eqnarray*}
		\sqrt{n}\left(\widehat{\boldsymbol\beta}-\boldsymbol\beta_0\right)\stackrel{d}{\longrightarrow}N\left({\bf 0},\boldsymbol\Omega_I^{-1}(\phi{\boldsymbol\Omega}_I^{\dag}+{\boldsymbol\Omega}_{II}){\boldsymbol\Omega}_I^{-1}\right),
	\end{eqnarray*}	
as $n\rightarrow\infty$, where $\boldsymbol\beta_0$ denotes the true value of $\boldsymbol\beta$.
\end{theorem}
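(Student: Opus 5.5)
The argument follows the concave-objective approach of \cite{davetal2000} and \cite{davwu2009}. Define, for ${\bf u}\in\mathbb R^p$,
\begin{eqnarray*}
g_n({\bf u})=\phi\left[\ell(\boldsymbol\beta_0+n^{-1/2}{\bf u})-\ell(\boldsymbol\beta_0)\right].
\end{eqnarray*}
This is concave in ${\bf u}$ and maximised at $\widehat{\bf u}_n=\sqrt n(\widehat{\boldsymbol\beta}-\boldsymbol\beta_0)$. Because the link is canonical, $b'({\bf x}_{nt}^\top\boldsymbol\beta_0)=\mu_t^0$ and $b''=V(\mu_t^0)$. A second-order Taylor expansion therefore gives
\begin{eqnarray*}
g_n({\bf u})={\bf u}^\top n^{-1/2}\sum_{t=1}^n{\bf x}_{nt}(Y_t-\mu_t^0)-\frac12{\bf u}^\top\Big(n^{-1}\sum_{t=1}^n{\bf x}_{nt}{\bf x}_{nt}^\top V(\mu_t^0)\Big){\bf u}+R_n({\bf u}).
\end{eqnarray*}
The remainder $R_n({\bf u})$ is controlled by continuity of $b''$ and Assumption (\ref{as:sup}), which makes $\max_t|{\bf x}_{nt}^\top{\bf u}|/\sqrt n\to0$. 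The quadratic term converges to $-\frac12{\bf u}^\top\boldsymbol\Omega_I{\bf u}$ by (\ref{as:b2}). The plan is therefore to show that $g_n$ converges in distribution, pointwise and hence as a concave process, to $g({\bf u})={\bf u}^\top{\bf N}-\frac12{\bf u}^\top\boldsymbol\Omega_I{\bf u}$ with ${\bf N}\sim N({\bf 0},\phi\boldsymbol\Omega_I^\dag+\boldsymbol\Omega_{II})$. The standard argmax lemma for concave processes then yields $\widehat{\bf u}_n\stackrel{d}{\to}\boldsymbol\Omega_I^{-1}{\bf N}$, which gives the stated sandwich covariance. Consistency follows because $\widehat{\bf u}_n=O_p(1)$.

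The key step is the score. I would decompose
\begin{eqnarray*}
Y_t-\mu_t^0=(Y_t-\mu_t^0\nu_t)+\mu_t^0(\nu_t-1),
\end{eqnarray*}
so that
\begin{eqnarray*}
S_n=n^{-1/2}\sum_t{\bf x}_{nt}(Y_t-\mu_t^0\nu_t)+n^{-1/2}\sum_t{\bf x}_{nt}\mu_t^0(\nu_t-1)\equiv A_n+B_n.
\end{eqnarray*}
Theorem \ref{thm:aux} gives $B_n\stackrel{d}{\to}N({\bf 0},\boldsymbol\Omega_{II})$, and $B_n$ is measurable with respect to $\sigma(\nu)$.

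Conditionally on the whole latent path $\{\nu_t\}$, the summands of $A_n$ are independent, have mean zero, and have variances $\phi V(\mu_t^0\nu_t)$. Their conditional covariance is therefore $n^{-1}\sum_t{\bf x}_{nt}{\bf x}_{nt}^\top\phi V(\mu_t^0\nu_t)\stackrel{p}{\to}\phi\boldsymbol\Omega_I^\dag$ by (\ref{as:varfunc}). A conditional Lindeberg CLT, with the Lindeberg condition checked via (\ref{as:sup}) together with uniform integrability of the conditional second moments, gives
\begin{eqnarray*}
E[\exp(i{\bf s}^\top A_n)\mid\nu]\stackrel{p}{\to}\exp(-\phi{\bf s}^\top\boldsymbol\Omega_I^\dag{\bf s}/2).
\end{eqnarray*}
Then
\begin{eqnarray*}
E\exp(i{\bf s}^\top S_n)=E\left[\exp(i{\bf s}^\top B_n)E(\exp(i{\bf s}^\top A_n)\mid\nu)\right],
\end{eqnarray*}
and dominated convergence shows that this tends to $\exp(-\phi{\bf s}^\top\boldsymbol\Omega_I^\dag{\bf s}/2)\exp(-{\bf s}^\top\boldsymbol\Omega_{II}{\bf s}/2)$. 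So $A_n$ and $B_n$ are asymptotically independent Gaussians, and $S_n\stackrel{d}{\to}{\bf N}$.

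I expect the main obstacle to be the conditional Lindeberg step, since only second moments of $Y_t$ given $\nu_t$ are directly available. I would first try a Lyapunov-type bound using $E|Y_t-\mu_t\nu_t|^{2+\delta}$ given $\nu_t$, which is finite for exponential-family laws, combined with $\max_t n^{-1/2}|{\bf x}_{nt}\mu_t^0|\to0$. If that fails, I would fall back on a truncation argument on $\nu_t$. The remainder $R_n$ also needs a uniform-on-compacts bound, which I expect to be routine given continuity of $b''$.
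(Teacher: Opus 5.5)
Your proposal is correct and is essentially the paper's proof. Both use the same localized objective (the paper minimises $-\phi\{\ell(\boldsymbol\beta)-\ell(\boldsymbol\beta_0)\}$, you maximise its negative) with a quadratic expansion of $b$; both handle the score by conditioning on the latent path, so that the factor $\exp\{i{\bf s}^\top B_n\}$ is controlled by Theorem \ref{thm:aux} and the conditional Gaussian factor by (\ref{as:varfunc}), and both finish with the argmin/argmax lemma for convex/concave processes as in Davis and Wu. The only difference is that the paper obtains the conditional Gaussian factor by a second-order expansion of $b(\widetilde\theta_t+i\phi n^{-1/2}{\bf s}^\top{\bf x}_{nt})$ in the exponential-family characteristic function, where you use a conditional Lindeberg CLT. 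One small caution, which applies equally to the paper's heuristic expansions: (\ref{as:sup}) bounds ${\bf x}_{nt}\mu_t^0$ rather than ${\bf x}_{nt}$, so it does not by itself give $\max_t|{\bf x}_{nt}^\top{\bf u}|/\sqrt n\to0$, which your remainder bound needs in, e.g., the Poisson case. That conclusion does hold if $\mu_t^0$ is bounded away from zero or the covariates are bounded, as in the applications.
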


\begin{proof}

Following a similar strategy by \cite{davetal2000} and \cite{davwu2009}, we define ${\bf u}\equiv \sqrt{n}(\boldsymbol\beta-\boldsymbol\beta_0)$ and $g_n({\bf u})\equiv-\phi\{\ell(\boldsymbol\beta)-\ell(\boldsymbol\beta_0)\}$, where $\boldsymbol\beta_0$ is the true parameter vector. Under this formulation, the GLM estimator is obtained by minimising $g_n({\bf u})$ with respect to $\boldsymbol\beta$.
Also, $\boldsymbol\beta=n^{-1/2}{\bf u}+\boldsymbol\beta_0$. It follows that
\begin{eqnarray*}
	g_n({\bf u})&=&-\phi\{\ell(n^{-1}{\bf u}+\boldsymbol\beta_0)-\ell(\boldsymbol\beta_0)\}\\
	&=&\sum_{t=1}^n\left\{
	{\bf x}^\top_{nt}\boldsymbol\beta_0Y_t-b({\bf x}^\top_{nt}\boldsymbol\beta_0)-{\bf x}^\top_{nt}(n^{-1/2}{\bf u}+\boldsymbol\beta_0)Y_t
	+b\left({\bf x}^\top_{nt}(n^{-1/2}{\bf u}+\boldsymbol\beta_0)\right)
	\right\}\\
	&=&\sum_{t=1}^n\left\{-{\bf x}^\top_{nt}n^{-1/2}{\bf u}Y_t+b\left({\bf x}^\top_{nt}(n^{-1/2}{\bf u}+\boldsymbol\beta_0)\right)-b({\bf x}^\top_{nt}\boldsymbol\beta_0)
	\right\}.
\end{eqnarray*}	

We have that 
\begin{eqnarray*}
\sum_{t=1}^n\left\{b\left({\bf x}^\top_{nt}(n^{-1/2}{\bf u}+\boldsymbol\beta_0)\right)-
b({\bf x}^\top_{nt}\boldsymbol\beta_0)\right\}\hspace{-.3cm}&\approx&\hspace{-.3cm} \sum_{t=1}^n \left\{{\bf x}^\top_{nt} n^{-1/2}{\bf u}b'({\bf x}^\top_{nt}\boldsymbol\beta_0)+\dfrac{({\bf x}^\top_{nt} n^{-1/2}{\bf u})^2}{2}b''({\bf x}^\top_{nt}\boldsymbol\beta_0)\right\}\\
&=&\sum_{t=1}^n \left\{{\bf x}^\top_{nt} n^{-1/2}{\bf u}\mu_t^0+\dfrac{({\bf x}^\top_{nt} n^{-1/2}{\bf u})^2}{2}V(\mu_t^0)\right\},
\end{eqnarray*}
where we have defined $\mu_t^0\equiv h({\bf x}^\top_{nt}\boldsymbol\beta_0)$, and $V(\mu_t^0)\equiv b''({\bf x}^\top_{nt}\boldsymbol\beta_0)$ denotes the variance function from GLMs.
It follows that 
\begin{eqnarray}\label{eq:g_approx}
	g_n({\bf u})\stackrel{p}{\approx} -n^{-1/2}\sum_{t=1}^n{\bf x}^\top_{nt}{\bf u}(Y_t-\mu_t^0)+\dfrac{1}{2}{\bf u}^\top\left(n^{-1}\sum_{t=1}^n{\bf x}_{nt}{\bf x}_{nt}^\top V(\mu_t^0)\right){\bf u}.
\end{eqnarray}	

From Assumption (\ref{as:b2}), the second term on the right side of (\ref{eq:g_approx}) converges to $\dfrac{1}{2}{\bf u}^\top\boldsymbol\Omega_I{\bf u}$ as $n\rightarrow\infty$. We will now establish the asymptotic normality of ${\bf Q}_n\equiv n^{-1/2}\displaystyle\sum_{t=1}^n(Y_t-\mu_t^0){\bf x}_{nt}$.

For a real vector ${\bf s}$, the conditional characteristic function of ${\bf Q}_n$ given $\nu_t$ is
\begin{eqnarray*}
E\left(\exp\{i{\bf s}^\top {\bf Q}_n\}|\nu_t\right)=\exp\left\{-i{\bf s}^\top n^{-1/2}\sum_{t=1}^n\mu_t^0{\bf x}_{nt}\right\}	
E\left(\exp\left\{-i{\bf s}^\top n^{-1/2}\sum_{t=1}^nY_t{\bf x}_{nt}\right\}\Big|\nu_t\right),
\end{eqnarray*}	
where we use the model specification $Y_t|\nu_t\sim\mbox{EF}(\mu_t\nu_t,\phi)$ and the notation $\widetilde\theta_t=g(\mu_t\nu_t)$ to obtain that
\begin{eqnarray*}
	E\left(\exp\left\{-i{\bf s}^\top n^{-1/2}\sum_{t=1}^nY_t{\bf x}_{nt}\right\}\Big|\nu_t\right)=\exp\left\{\phi^{-1}\sum_{t=1}^n\left[b(\widetilde\theta_t+i\phi n^{-1/2}{\bf s}^\top{\bf x}_{nt})-b(\widetilde\theta_t)\right]\right\}\\
	\stackrel{p}{\approx} \exp\left\{\phi^{-1}\sum_{t=1}^n\left[i{\bf s}^\top{\bf x}_{nt}n^{-1/2}\phi b'(\widetilde\theta_t)+\dfrac{(i{\bf s}^\top{\bf x}_{nt}n^{-1/2}\phi)^2}{2} b''(\widetilde\theta_t)\right]\right\}\\
	= \exp\left\{i{\bf s}^\top n^{-1/2}\sum_{t=1}^n{\bf x}_{nt}\mu_t^0\nu_t-\dfrac{1}{2}\phi{\bf s}^\top\left(n^{-1}\sum_{t=1}^n{\bf x}_{nt} {\bf x}_{nt}^\top V(\mu_t^0\nu_t)\right){\bf s}\right\}.
\end{eqnarray*}	

Hence, we obtain that
\begin{eqnarray*}
	E\left(\exp\{i{\bf s}^\top {\bf Q}_n\}|\nu_t\right)\stackrel{p}{\approx} \exp\left\{i{\bf s}^\top n^{-1/2}\sum_{t=1}^n{\bf x}_{nt}\mu_t^0(\nu_t-1)-\dfrac{1}{2}\phi{\bf s}^\top\left(n^{-1}\sum_{t=1}^n{\bf x}_{nt} {\bf x}_{nt}^\top V(\mu_t^0\nu_t)\right){\bf s}\right\}.
\end{eqnarray*}	

From Theorem \ref{thm:aux}, $n^{-1/2}\displaystyle\sum_{t=1}^n{\bf x}_{nt}\mu_t^0(\nu_t-1)\stackrel{d}{\rightarrow}{\bf F}\sim N\left({\bf 0},{\boldsymbol\Omega}_{II}\right)$, and from Assumption \ref{as:varfunc} $n^{-1}\displaystyle\sum_{t=1}^n{\bf x}_{nt} {\bf x}_{nt}^\top V(\mu_t^0\nu_t)\stackrel{p}{\longrightarrow}\boldsymbol\Omega^\dag_I$. Hence,
\begin{eqnarray*}
	\lim_{n\rightarrow\infty}E\left(\exp\{i{\bf s}^\top {\bf Q}_n\}\right)=E\left(\exp\left\{i{\bf s}^\top {\bf F}-\dfrac{1}{2}\phi{\bf s}^\top\boldsymbol\Omega^\dag_I{\bf s}\right\}\right)=\exp\left\{-\dfrac{1}{2}{\bf s}^\top\left(\phi\boldsymbol\Omega^\dag_I+\boldsymbol\Omega_{II}\right){\bf s}\right\},
\end{eqnarray*}	
that is ${\bf Q}_n\stackrel{d}{\rightarrow}{\bf Q}\sim N({\bf0},\phi\boldsymbol\Omega^\dag_I+\boldsymbol\Omega_{II})$. This implies that
\begin{eqnarray*}
	g_n({\bf u})\stackrel{d}{\rightarrow}-{\bf u}^\top{\bf Q}+\dfrac{1}{2}{\bf u}^\top\boldsymbol\Omega_I{\bf u}.
\end{eqnarray*}	

Arguing similarly as in \cite{davwu2009}, the limit process of $g_n({\bf u})$ has a unique minimiser, say $\widetilde{\bf u}$, and $	\sqrt{n}\left(\widehat{\boldsymbol\beta}-\boldsymbol\beta_0\right)\stackrel{d}{\longrightarrow}\widetilde{\bf u}\sim N\left({\bf 0},\boldsymbol\Omega_I^{-1}(\phi{\boldsymbol\Omega}_I^{\dag}+{\boldsymbol\Omega}_{II}){\boldsymbol\Omega}_I^{-1}\right)$.
\end{proof}


\section{Prediction and forecasting}\label{sec:prediction}


We here propose to perform prediction or forecasting based on the conditional expectation $E(Y_{t+h}|Y_t)$. Although our proposed GLM time series are not Markovian, this can be used for prediction purposes. Such approach has been successfully employed in non-Markovian models; for instance, see \cite{maiaetal2021} and  \cite{baromb2022}. Following similar arguments as in those papers, for $l\geq1$, we obtain that
\begin{eqnarray}\label{cond_exp}
E(Y_{t+l}|Y_t)=E[E(Y_{t+l}|\nu_t)|Y_t],
\end{eqnarray}	
with $E(Y_{t+l}|\nu_t)=E[E(Y_{t+l}|\nu_{t+l})|\nu_t]=\mu_{t+l}E(\nu_{t+l}|\nu_t)$. We will now provide explicit expressions for (\ref{cond_exp}) for the three latent processes considered in this paper. Under a log-normal AR(1) process, it follows that $E(\nu_{t+l}|\nu_t)=\exp(\rho^l\sigma^2(1-\rho^l)/2+\rho^l\log\nu_t)$. Hence,
\begin{eqnarray*}
	E(Y_{t+l}|Y_t)=\mu_{t+l}\exp(\rho^l\sigma^2(1-\rho^l)/2)E[\exp(\rho^l\log\nu_t)|Y_t)],
\end{eqnarray*}	
with 
\begin{eqnarray*}
E[\exp(\rho^l\log\nu_t)|Y_t)]=\dfrac{\displaystyle\int_{0}^\infty\exp(\rho^l\log\nu)f_{y_t|\nu_t}(y|\nu)f_{\nu_t}(\nu)d\nu}{\displaystyle\int_{0}^\infty f_{y_t|\nu_t}(y|\nu)f_{\nu_t}(\nu)d\nu},
\end{eqnarray*}	
which can be computed numerically, where $f_{y_t|\nu_t}(\cdot|\cdot)$ and  $f_{\nu_t}(\nu)$ are $\mbox{EF}$ and log-normal density functions, respectively. For $\nu_t\sim\mbox{GAR}(1)$, we have that $E(\nu_{t+l}|\nu_t)=1+\rho^l(\nu_1-1)$ (for instance, see \cite{baromb2022}), and therefore
\begin{eqnarray}\label{cond_exp_gar}
	E(Y_{t+l}|Y_t)=\mu_{t+l}\{1+\rho^l[E(\nu_t|Y_t)-1]\}.
\end{eqnarray}	

The conditional expectation on the right side of (\ref{cond_exp_gar}) can be computed explicitly for some special cases. Let us consider the Poisson and gamma GLMs. For the Poisson scenario, we can show that the conditional distribution of $\nu_t$ given $Y_t$ follows a gamma distribution with shape and rate parameters $y_t+1/\sigma^2$ and $\mu_t+1/\sigma^2$, respectively, which gives us that $E(\nu_t|Y_t)=\dfrac{y_t+1/\sigma^2}{\mu_t+1/\sigma^2}$. Under the gamma setup, it can be shown that the conditional distribution of $\nu_t|Y_t$ follows a generalised inverse Gaussian distribution with parameters $a=2/\sigma^2$, $b_t=\dfrac{2y_t}{\mu_t\phi}$, and $p=1/\sigma^2-1/\phi$, with density function assuming the form $f_{\nu_t|Y_t}(\nu_t|y_t)=\dfrac{(a/b_t)^{p/2}}{2\mathcal K_p(\sqrt{ab_t})}\nu_t^{p-1}\exp\{-(a\nu_t+b_t/\nu_t)/2\}$, for $\nu_t>0$, where $\mathcal K_p(u)=\dfrac{1}{2}\displaystyle\int_0^\infty \omega^{p-1}\exp\left\{-\dfrac{u}{2}\left(\omega+\dfrac{1}{\omega}\right)\right\}d\omega$ is the modified Bessel function of third kind for $u>0$ and $p\in\mathbb R$. Using this result, it follows that $E(\nu_t|Y_t)=\sqrt{\dfrac{\sigma^2y_t}{\phi\mu_t}}\dfrac{\mathcal K_{1/\sigma_2-1/\phi+1}\left(2\sqrt{\dfrac{y_t}{\sigma^2\phi\mu_t}}\right)}{\mathcal K_{1/\sigma_2-1/\phi}\left(2\sqrt{\dfrac{y_t}{\sigma^2\phi\mu_t}}\right)}$.

Now, consider the ARCH(1) latent process. Then, $E(\nu_{t+l}|\nu_t)=1+\rho^l(\nu_t-1)$, and therefore the conditional expectation $E(Y_{t+l}|Y_t)$ assumes the form (\ref{cond_exp_gar}). In this case, closed-forms expressions for the conditional expectation on the right side of (\ref{cond_exp_gar}) cannot be obtained, but they can be computed aproximately  via Monte Carlo simulation. We have that
\begin{eqnarray*}
	E(\nu_t|Y_t)=\dfrac{\displaystyle\int_0^\infty\nu_tf_{y_t|\nu_t}(y_t|\nu_t)f_{\nu_t}(\nu_t)d\nu_t}{\displaystyle\int_0^\infty f_{y_t|\nu_t}(y_t|\nu_t)f_{\nu_t}(\nu_t)d\nu_t}=\dfrac{E_{\nu_t}\left(\nu_tf_{y_t|\nu_t}(y_t|\nu_t)\right)}{E_{\nu_t}\left(f_{y_t|\nu_t}(y_t|\nu_t)\right)},
\end{eqnarray*}	
where the underscript in $E_{\nu_t}(\cdot)$ means that the expected value is taken with respect to $\nu_t$. Hence, we generate $m$ draws from ARCH trajectories $\{\nu_t^{(i)}\}_{t=1}^n$ ($i=1,\ldots,m$) and approximate the conditional expectations by their Monte Carlo estimates
\begin{eqnarray*}
	\widehat{E}(\nu_t|Y_t)=\dfrac{m^{-1}\displaystyle\sum_{k=1}^m\nu_t^{(k)}f_{y_t|\nu_t}(y_t|\nu_t^{(k)})}{m^{-1}\displaystyle\sum_{k=1}^mf_{y_t|\nu_t}(y_t|\nu_t^{(k)})},
\end{eqnarray*}	
for $t=1,\ldots,n$. The methodology proposed here for prediction will be illustrated in the real data applications in Section \ref{sec:applications}.


\section{Real data analyses}\label{sec:applications}

We here present two real data applications to show the performance of our GLM time series models in practice to handle counts and positive continuous data.

\subsection{Measles infections in North Rhine-Westphalia (Germany)}
Let us consider the weekly number of measles infections reported in North Rhine-Westphalia (Germany) from January 2001 to May 2013, so totaling 646 observations. The data is publicly available in the \texttt{R} package \texttt{tscount}. The plot of the measles time series and its ACF are displayed in Figure \ref{fig:measles_plot}. The following vector of covariates, including trend and seasonal components, are considered:
\begin{eqnarray*}
	{\bf x}_t=\left(1,t/646,\cos(2\pi t/52),\sin(2\pi t/52),\cos(4\pi t/52),\sin(4\pi t/52),\cos(8\pi t/52),\sin(8\pi t/52)\right)^\top,	
\end{eqnarray*}
for $t=1,\ldots,646$. 

\begin{figure}[h!]
	\centering
	\includegraphics[width = .5\linewidth]{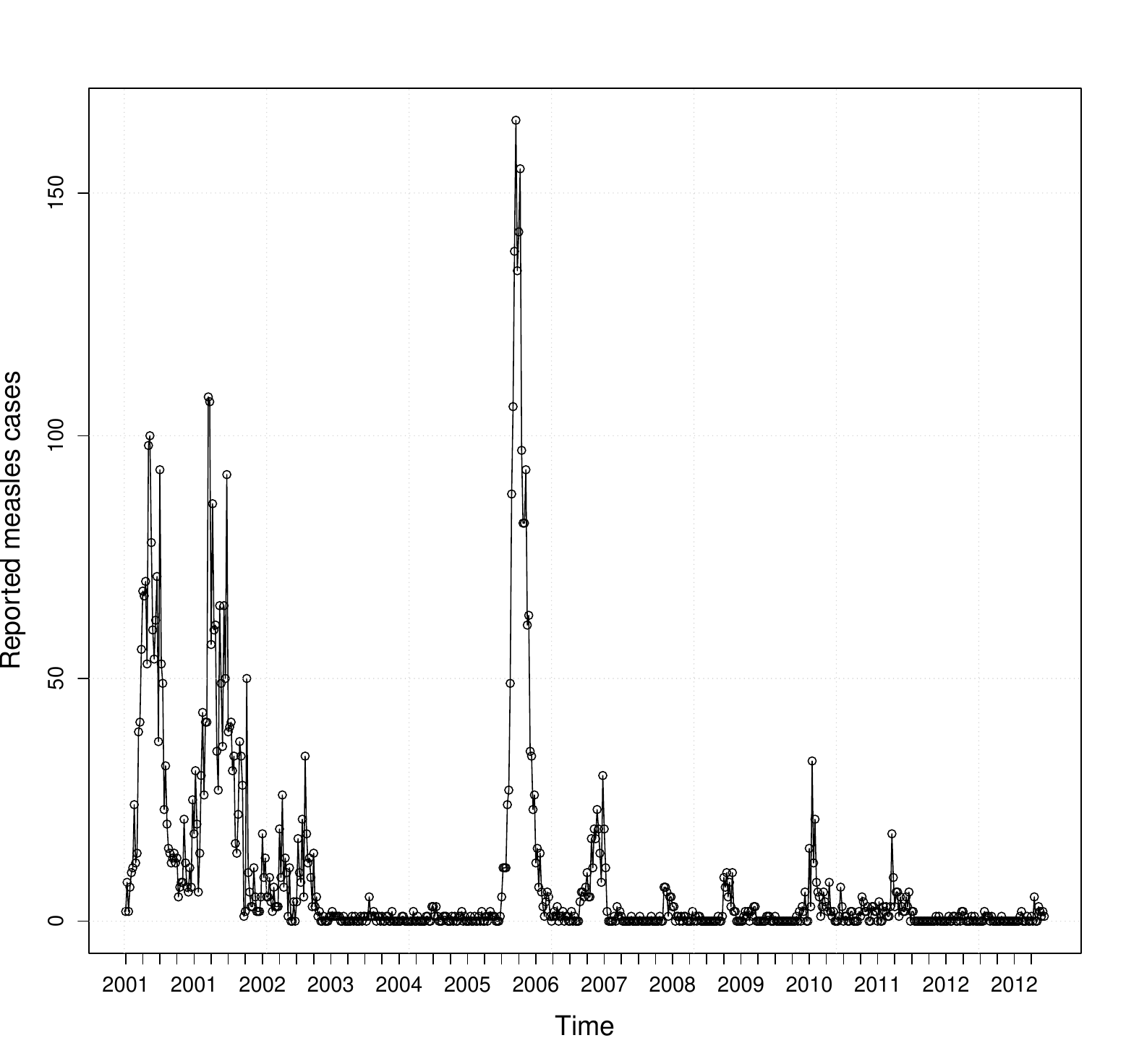}\includegraphics[width = .5\linewidth]{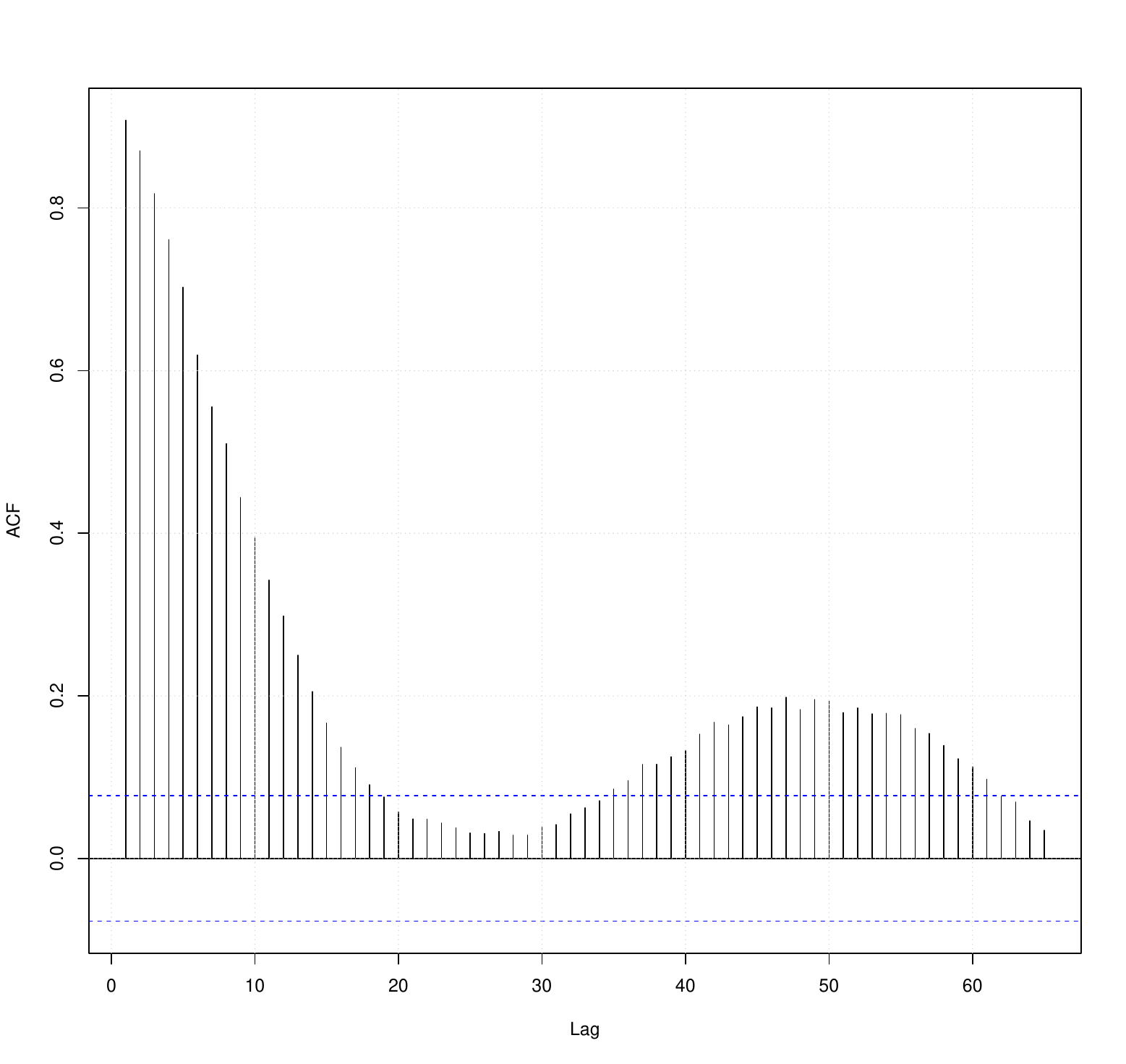}
	\caption{Plots of the weekly reported cases of measles in North Rhine-Westphalia, Germany, between January 2001 and May 2013 and its respective autocorrelation function.}\label{fig:measles_plot}
\end{figure}

Table \ref{tab:inf_measles} provides a summary of the GLM fitting with the respective standard errors. The estimation of the parameters $\sigma^2$ and $\rho$ of the Poisson LNAR(1) model are obtained through the method of moment procedure:
\begin{eqnarray*}
	\widehat\sigma^2&=&\log\left(\dfrac{\sum_{t=1}^n[(Y_t-\widehat\mu_t)^2-\widehat\mu_t]}{\sum_{t=1}^n\widehat\mu_t^2}+1\right),\quad
	\widehat\rho=\log\left(\dfrac{\sum_{t=1}^n(Y_t-\widehat\mu_t)(Y_{t-1}-\widehat\mu_{t-1})}{\sum_{t=2}^n\widehat\mu_t\widehat\mu_{t-1}}+1\right)\big/\widehat\sigma^2,
\end{eqnarray*}	
$\widehat\mu_t=h({\bf x}_{nt}^\top\widehat{\boldsymbol\beta})$, with  $\widehat{\boldsymbol\beta}$ being the GLM estimator of $\boldsymbol\beta$. In this case, the estimates are $\widehat\sigma^2=0.751$ and $\widehat\rho=0.924$. As for the Poisson GAR(1) model, the method of moments estimators of $\sigma^2$ and $\rho$ assume the form
\begin{eqnarray*}
	\widehat\sigma^2&=&\dfrac{\sum_{t=1}^n[(Y_t-\widehat\mu_t)^2-\widehat\mu_t]}{\sum_{t=1}^n\widehat\mu_t^2},\quad
	\widehat\rho=\dfrac{\sum_{t=1}^n(Y_t-\widehat\mu_t)(Y_{t-1}-\widehat\mu_{t-1})}{\widehat\sigma^2\sum_{t=2}^n\widehat\mu_t\widehat\mu_{t-1}},
\end{eqnarray*}	
which yield the estimates $\widehat\sigma^2=1.118$ and $\widehat\rho=0.895$.

Under the Poisson ARCH(1) model, to estimate the parameter $\rho$, we use the first-order autocorrelation of $Y_t$ to obtain the method of moments estimator $\widehat\rho$ as the solution of the nonlinear equation
\begin{eqnarray*}
	\widehat\rho\left(\dfrac{3(1-\widehat\rho^2)}{1-3\widehat\rho^2}-1\right)=\displaystyle\sum_{t=1}^{n-1}(Y_t-\widehat\mu_t)(Y_{t+1}-\widehat\mu_{t+1})/\displaystyle\sum_{t=1}^{n-1}\widehat\mu_t\widehat\mu_{t+1},
\end{eqnarray*}	
The method of moments estimate of $\rho$ is $\widehat\rho=0.333$, which belongs to the interval $(0,1/\sqrt{3})$. Therefore, the theoretical results obtained in this paper can be applied to analyze the measles time series.

To obtain the standard errors under the Poisson case ($V(\mu)=\mu$), we use the fact that $E(\nu_t)=1$ and hence both $\boldsymbol\Omega_{I}$ and $\boldsymbol\Omega^\dag_{I}$ can be consistently estimated by $\widehat{\boldsymbol\Omega}_{I}=\widehat{\boldsymbol\Omega}^\dag_{I}=n^{-1}\displaystyle\sum_{t=1}^n{\bf x}_{nt}{\bf x}_{nt}^\top\widehat\mu_t$, while $\widehat{\boldsymbol\Omega}_{II}=n^{-1}\displaystyle\sum_{t=1}^n\sum_{k=1}^n{\bf x}_{nt}{\bf x}_{nk}^\top\widehat\mu_t\widehat\mu_k\widehat\gamma_\nu(|t-k|)$ is a consistent estimator for ${\boldsymbol\Omega}_{II}$, where $\widehat\gamma_\nu(\cdot)$ is a consistent estimator of $\gamma_\nu(\cdot)$.

In addition, a parametric bootstrap with 1000 replications is also employed to obtain the standard errors and the related results are also presented in Table  \ref{tab:inf_measles}. As expected, the GLM approach does not provide correct standard errors as it ignores the temporal dependence. As a consequence, we notice the difference between such a quantity provided under GLM and bootstrap approach, which is line with the numerical experiments by \cite{davetal2000} and \cite{davwu2009}. Only the covariate $\sin(8\pi t/52)$ is not significant according to GLM-based inference at a significance level at 5\%. On the other hand, the covariates $\cos(4\pi t/52)$, $\sin(4\pi t/52)$, $\cos(8\pi t/52)$, and $\sin(8\pi t/52)$ are not significant based on the parametric bootstrap method, where a normal approximation was considered, which is in line with the inference based on correct information matrix, where the temporal dependence is taken into account.

\begin{table}[h!]
	\centering
	\begin{tabular}{|c|cccccccc|}
		\hline
		estimates & $\beta_0$ & $\beta_1$ & $\beta_2$ & $\beta_3$ & $\beta_4$ & $\beta_5$ & $\beta_6$ & $\beta_7$ \\
		\hline
		GLM   & 3.043     & $-$3.370  & $-$0.683  & 1.108     & $-$0.054  & $-$0.083  & $-$0.040  & $-$0.012  \\
		Boot. LNAR& 2.947& $-$3.296& $-$0.675& 1.108& $-$0.054& $-$0.079& $-$0.037& $-$0.011\\
		Boot. GAR&2.861& $-$3.145& $-$0.728& 1.070& $-$0.078& $-$0.095& $-$0.037& $-$0.022\\
		Boot. ARCH& 2.962& $-$3.328& $-$0.684& 1.108& $-$0.066& $-$0.078& $-$0.035& $-$0.007\\
		\hline
		stand. error & $\beta_0$ & $\beta_1$ & $\beta_2$ & $\beta_3$ & $\beta_4$ & $\beta_5$ & $\beta_6$ & $\beta_7$ \\
		\hline
		GLM     & 0.025     & 0.057     & 0.027     & 0.029     & 0.023     & 0.023     & 0.019     & 0.019     \\
		LNAR& 0.441& 0.981& 0.216& 0.221& 0.148& 0.150& 0.097& 0.097\\
		Boot. LNAR& 0.391 &0.879& 0.194& 0.201& 0.129& 0.130& 0.084& 0.082\\			 
		GAR& 0.418& 0.946& 0.225& 0.229& 0.153& 0.155& 0.098& 0.098\\	
		Boot. GAR&0.378& 0.821& 0.220& 0.224& 0.147& 0.157& 0.102& 0.095\\	 			 
		ARCH&0.248& 0.604& 0.185& 0.182& 0.207& 0.205& 0.214& 0.214\\ 
		Boot. ARCH    & 0.212& 0.534& 0.171& 0.164& 0.188& 0.185& 0.194& 0.185   \\
		\hline
	\end{tabular}
	\caption{Parameter estimates based on GLM, bootstrap, and their respective standard errors for the measles time series.}
	\label{tab:inf_measles}
\end{table}

To evaluate the predictive performance of the models, we compute the root mean square error (RMSE) based in-sample forecasting. Table \ref{tab:pred_measles} provides the RMSE and the correlation between predictions and observations for the models considered here. In-sample forecasting for the weekly reported cases of measles based on GLM, GLM-LNAR, GLM-GAR, and GLM-ARCH along with the observed time series are displayed in Figure \ref{fig:measles_pred}. These results show that GLM-LNAR and GLM-GAR yield good results in terms of prediction with a better performance for the GLM-based on the GAR latent process.

\begin{table}[h!]
	\centering
	\begin{tabular}{|c|cccc|}
		\hline
		Model$\rightarrow$ & GLM    & GLM-LNAR & GLM-GAR & GLM-ARCH \\
		\hline
		RMSE               & 17.761 &  8.837        &   8.724      & 12.893   \\
		Correlation        & 0.582  &  0.914        &  0.917       & 0.820 \\
		\hline
	\end{tabular}
	\caption{RMSE and correlation between observations and their respective in-sample predictions under GLM, GLM-LNAR, GLM-GAR, and GLM-ARCH for the measles time series data analysis.}
	\label{tab:pred_measles}
\end{table}

\begin{figure}[h!]
	\centering
	\includegraphics[width = .5\linewidth]{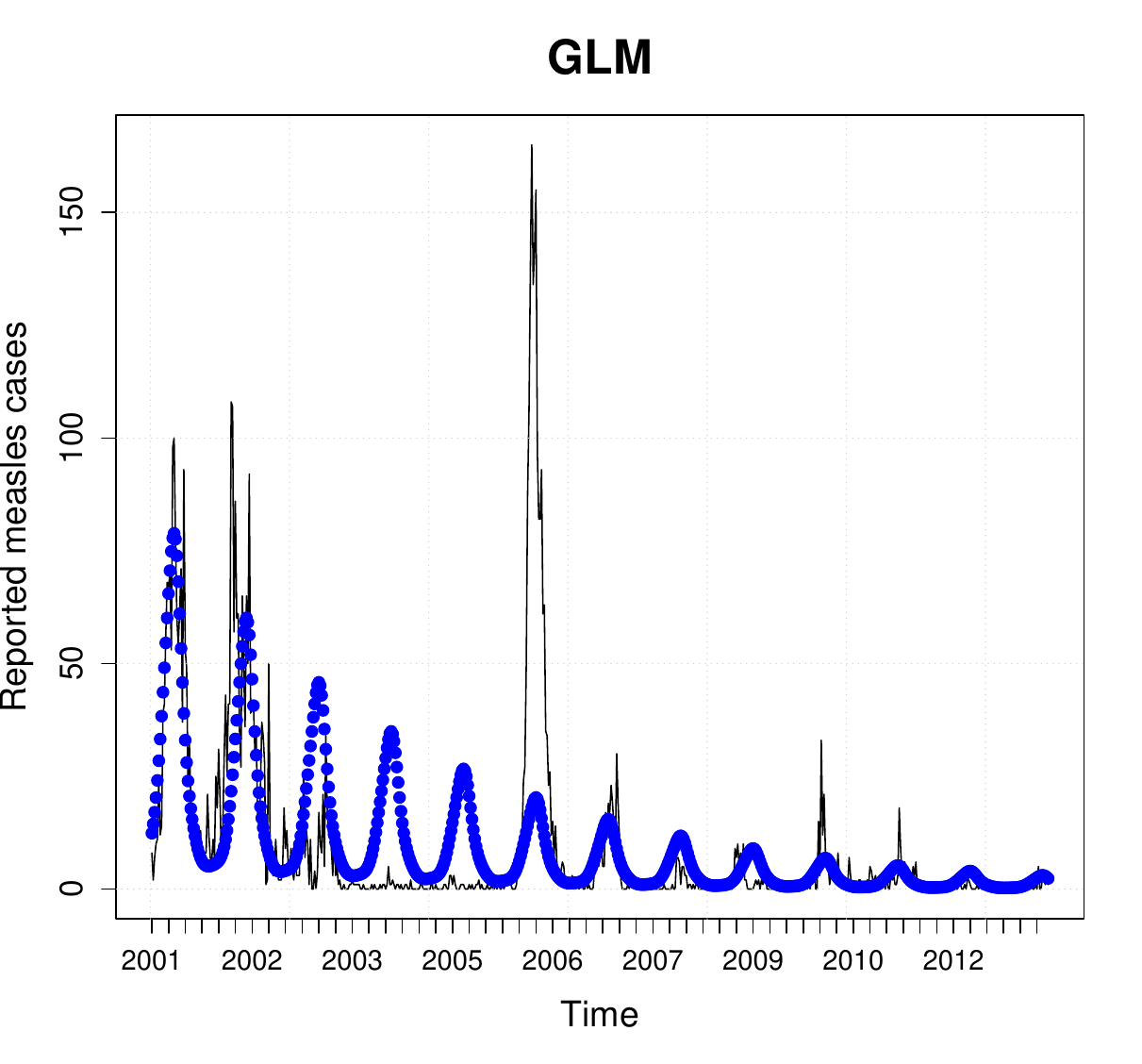}\includegraphics[width = .5\linewidth]{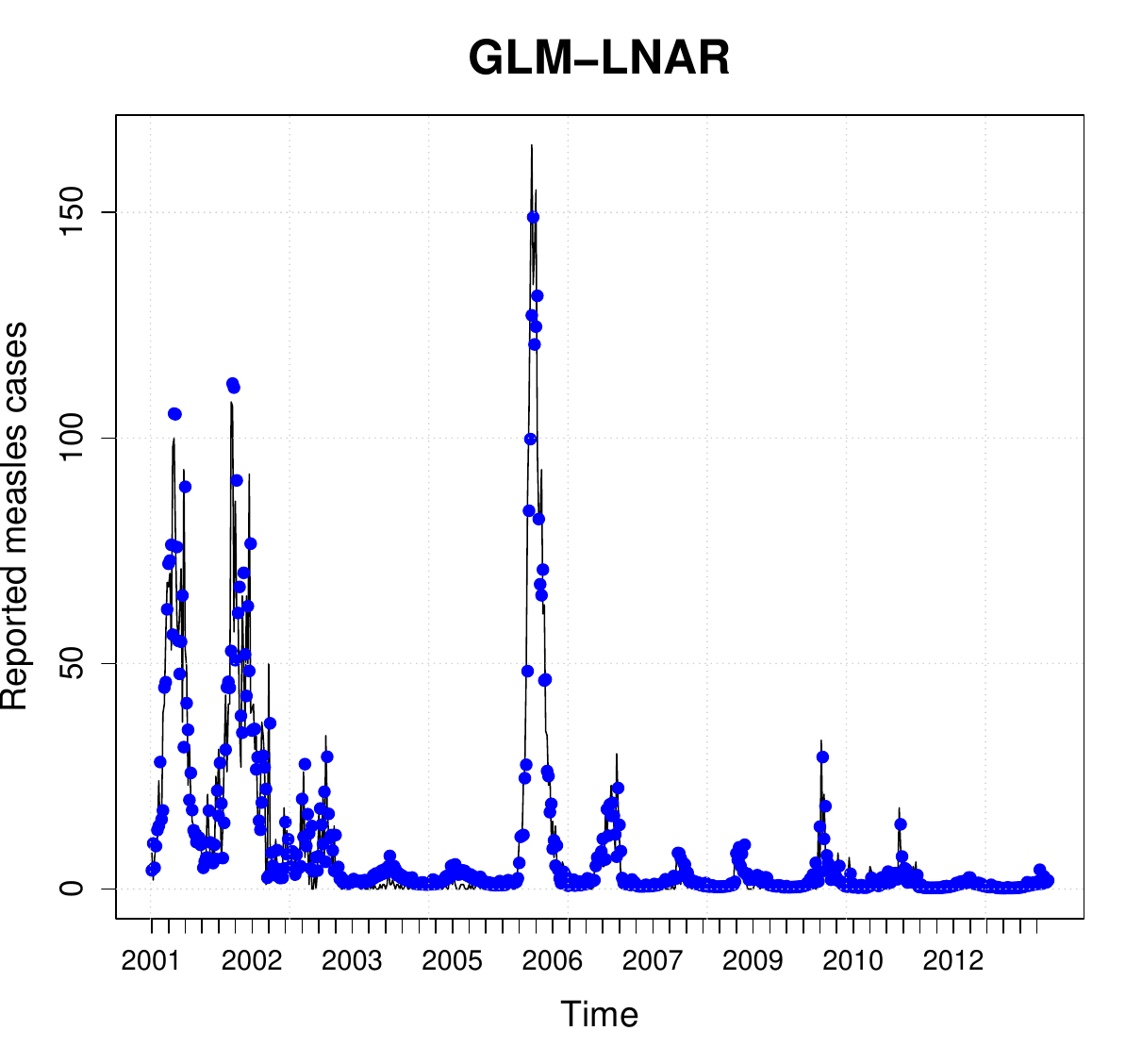}
	\includegraphics[width = .5\linewidth]{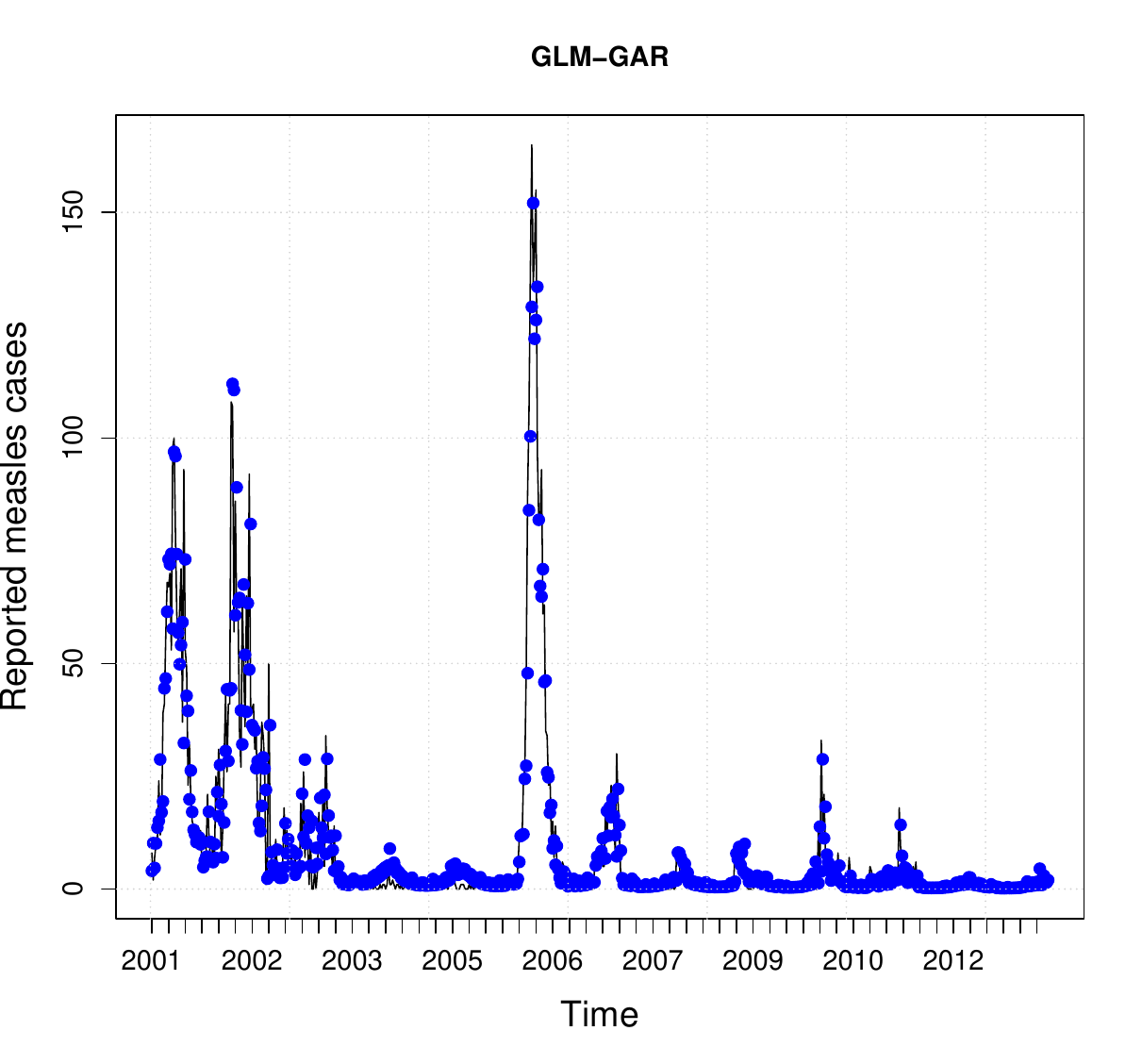}\includegraphics[width = .5\linewidth]{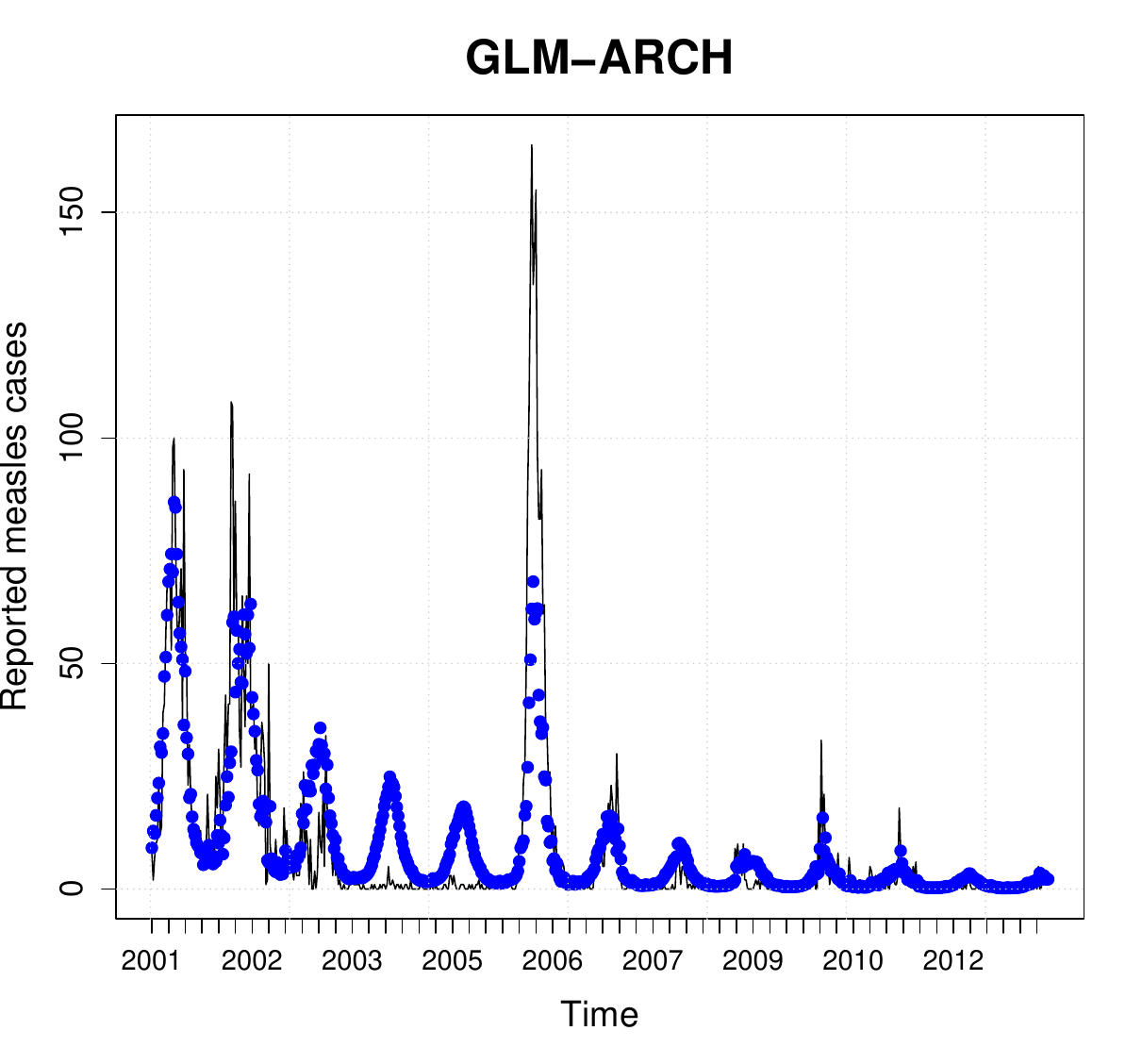}
	\caption{In-sample forecasting (dots) for the weekly reported cases of measles based on GLM, GLM-LNAR, GLM-GAR, and GLM-ARCH along with the observed time series (solid line).}\label{fig:measles_pred}
\end{figure}

\subsection{Paleoclimatic glacial varves}

The second real data application is about the ticknesses of the yearly sedimentary deposits (varves) from one location in Massachusetts (about 12,600 years ago) for 634 years. Varves can be used as proxies for paleoclimatic parameters, for instance, to reconstruct past climates and other environmental changes. For more details, see Example 2.6 from the book  by \cite{shusto2011}. A plot of the varve time series and its ACF are provided in Figure \ref{fig:varve_plot}.  We consider a gamma GLM (with an inverse canonical link function) driven by latent processes and vector of covariates with a trend: ${\bf x}_t=\left(1,t/634\right)^\top$, for $t=1,\ldots,634$.

\begin{figure}[h!]
	\centering
	\includegraphics[width = .5\linewidth]{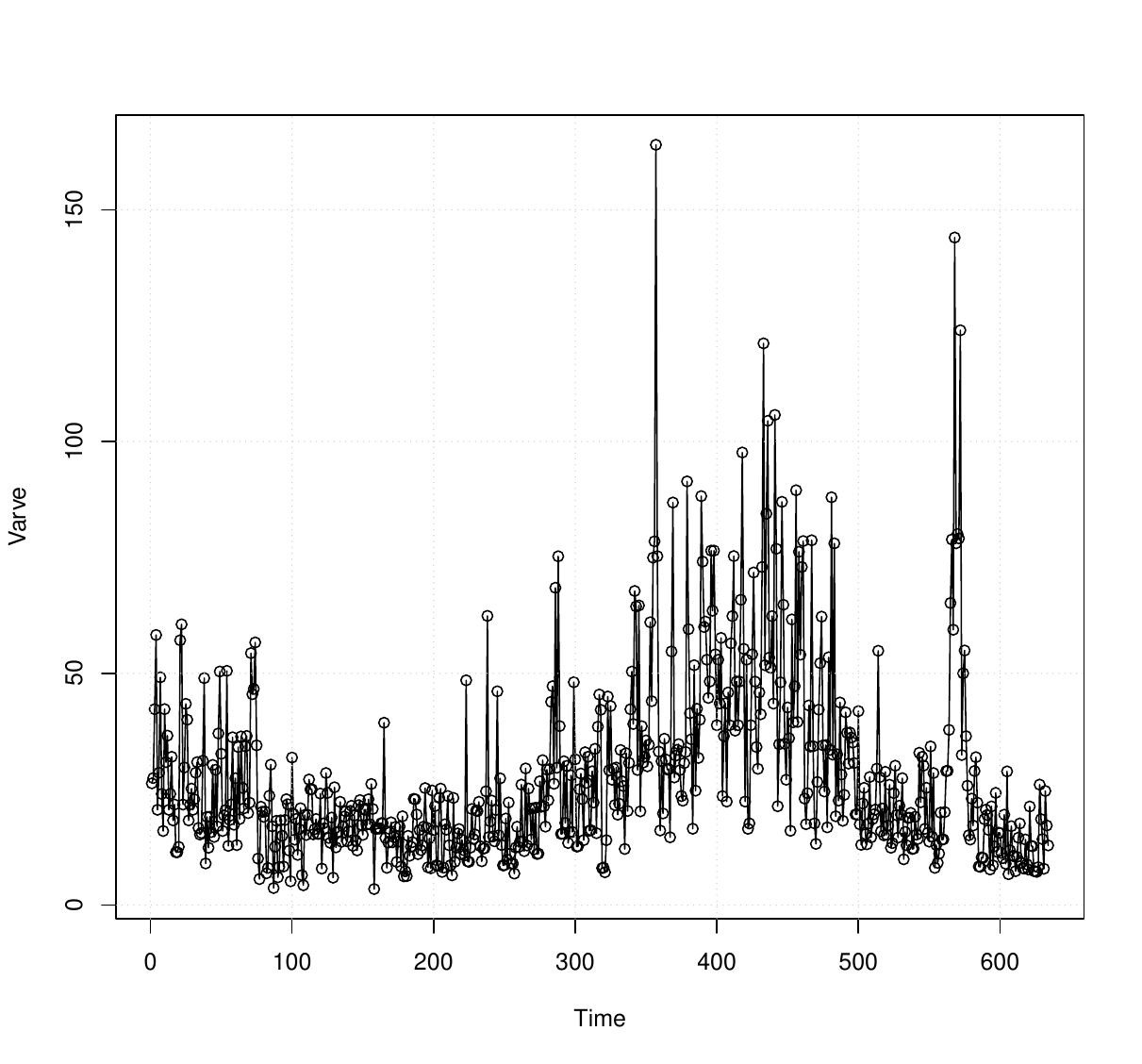}\includegraphics[width = .5\linewidth]{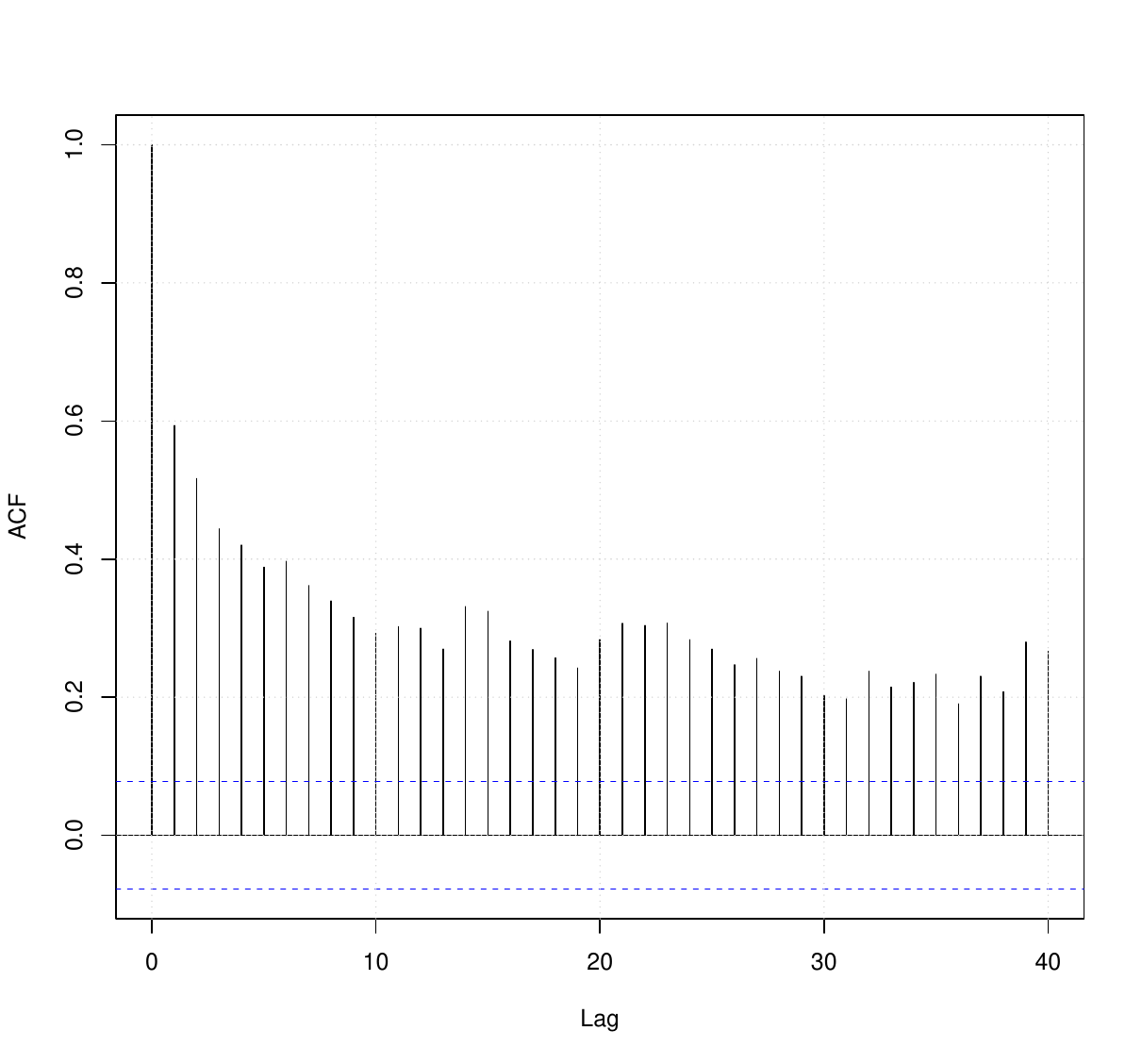}
	\caption{Plots of the varve time series and its respective autocorrelation function.}\label{fig:varve_plot}
\end{figure}

The summary fit of the gamma GLM under LNAR(1), GAR(1), and ARCH(1) latent processes are provided in Table \ref{tab:inf_varve}. To estimate the dispersion $\phi$ parameters of the latent processes, we use method of moments. Under LNAR(1), we use the estimators
\begin{eqnarray*}
	\widehat\rho&=&\log\left(\dfrac{\sum_{t=3}^n(Y_t-\widehat\mu_t)(Y_{t-2}-\widehat\mu_{t-2})}{\sum_{t=3}^n\widehat\mu_{t}\widehat\mu_{t-2}}+1\right)\bigg/\log\left(\dfrac{\sum_{t=2}^n(Y_t-\widehat\mu_t)(Y_{t-1}-\widehat\mu_{t-1})}{\sum_{t=2}^n\widehat\mu_{t}\widehat\mu_{t-1}}+1\right),\\
	\widehat\sigma^2&=&\log^2\left(\dfrac{\sum_{t=2}^n(Y_t-\widehat\mu_t)(Y_{t-1}-\widehat\mu_{t-1})}{\sum_{t=2}^n\widehat\mu_{t}\widehat\mu_{t-1}}+1\right)\bigg/\log\left(\dfrac{\sum_{t=3}^n(Y_t-\widehat\mu_t)(Y_{t-2}-\widehat\mu_{t-2})}{\sum_{t=3}^n\widehat\mu_{t}\widehat\mu_{t-2}}+1\right),\\
	\widehat\phi&=&\exp(-\widehat\sigma^2)\left(\dfrac{\sum_{t=1}^n(Y_t-\widehat\mu_t)^2}{\sum_{t=1}^n\widehat\mu_t^2}+1\right)-1,
\end{eqnarray*}	
which yield the parameter estimates $\widehat\rho=0.881$, $\widehat\sigma^2=0.297$, and $\widehat\phi=0.123$. As for the GAR(1) latent process, we have the estimators
\begin{eqnarray*}
	\widehat\rho&=&\dfrac{\sum_{t=3}^n(Y_t-\widehat\mu_t)(Y_{t-2}-\widehat\mu_{t-2})}{\sum_{t=2}^n(Y_t-\widehat\mu_t)(Y_{t-1}-\widehat\mu_{t-1})}
	\dfrac{\sum_{t=2}^n\widehat\mu_{t}\widehat\mu_{t-1}}{\sum_{t=3}^n\widehat\mu_{t}\widehat\mu_{t-2}},\\	
	\widehat\sigma^2&=&\left(\dfrac{\sum_{t=2}^n(Y_t-\widehat\mu_t)(Y_{t-1}-\widehat\mu_{t-1})}{\sum_{t=2}^n\widehat\mu_{t}\widehat\mu_{t-1}}\right)^2\dfrac{\sum_{t=3}^n\widehat\mu_{t}\widehat\mu_{t-2}}{\sum_{t=3}^n(Y_t-\widehat\mu_t)(Y_{t-2}-\widehat\mu_{t-2})},\\
	\widehat\phi&=&(\widehat\sigma^2+1)^{-1}\left(\dfrac{\sum_{t=1}^n(Y_t-\widehat\mu_t)^2}{\sum_{t=1}^n\widehat\mu_t^2}+1\right)-1,
\end{eqnarray*}	
with the parameter estimates $\widehat\rho=0.867$, $\widehat\sigma^2=0.345$, and $\widehat\phi=0.123$. In this application, we do not consider the ARCH latent process since that the MM estimator of $\phi$ yielded a negative value.

We can notice a good agreement between the GLM and bootstrap estimates and standard errors (under our formulation) as well. By ignoring the dependence on the data, the trend is statistically significant, which is conflitant with the conclusion based on the modelling taking into account dependence, where this covariate is not significant.


\begin{table}[h!]
	\centering
	\begin{tabular}{|c|cc|}
		\hline
		estimates & $\beta_0$ & $\beta_1$   \\
		\hline
		GLM   &   0.044  &    $-$0.016       \\
		Boot. LNAR&  0.045 & $-$0.016  \\
		Boot. GAR&  0.046 & $-$0.018 \\
		\hline
		stand. error & $\beta_0$ & $\beta_1$   \\
		\hline
		GLM   & 0.002  & 0.003\\
		LNAR&  0.008 & 0.012 \\
		Boot. LNAR&  0.007 & 0.011 \\
		GAR&  0.008 & 0.012\\
		Boot. GAR&  0.007& 0.011\\	 
		\hline
	\end{tabular}
	\caption{Parameter estimates based on GLM, bootstrap, and their respective standard errors for the varve time series.}
	\label{tab:inf_varve}
\end{table}

Table \ref{tab:pred_varve} reports the RMSE and correlation between observations and their respective in-sample predictions under GLM, GLM-LNAR, and GLM-GAR for the varve time series. GLM-LNAR and GLM-GAR produce the best and similar results, with slight advantage for the latter.  Plots of the in-sample predictions based on GLM, GLM-LNAR, and GLM-GAR are presented in Figure \ref{fig:varve_pred}, which shows a good performance of the prediction methodology of our models proposed in Section \ref{sec:prediction}.

\begin{table}[h!]
	\centering
	\begin{tabular}{|c|ccc|}
		\hline
		Model$\rightarrow$ & GLM    & GLM-LNAR & GLM-GAR\\ 
		\hline
		RMSE               & 20.099   &   16.098  & 16.065  \\
		Correlation           & 0.148   &  0.610     &  0.612  \\ 
		\hline
	\end{tabular}
	\caption{RMSE and correlation between observations and their respective in-sample predictions under GLM, GLM-LNAR, and GLM-GAR for the varve time series data analysis.}
	\label{tab:pred_varve}
\end{table}

\begin{figure}[h!]
	\centering
	\includegraphics[width = .5\linewidth]{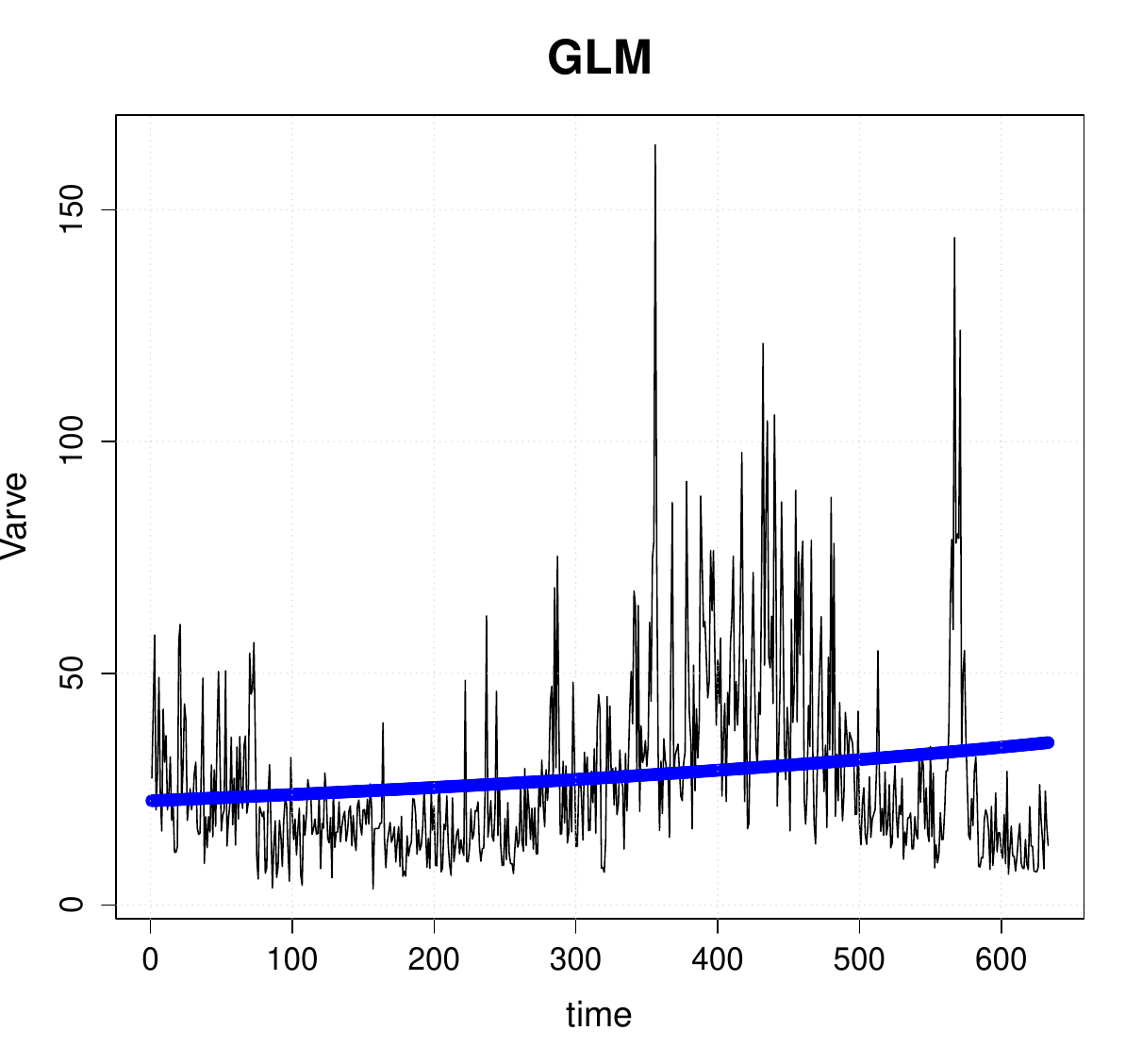}\includegraphics[width = .5\linewidth]{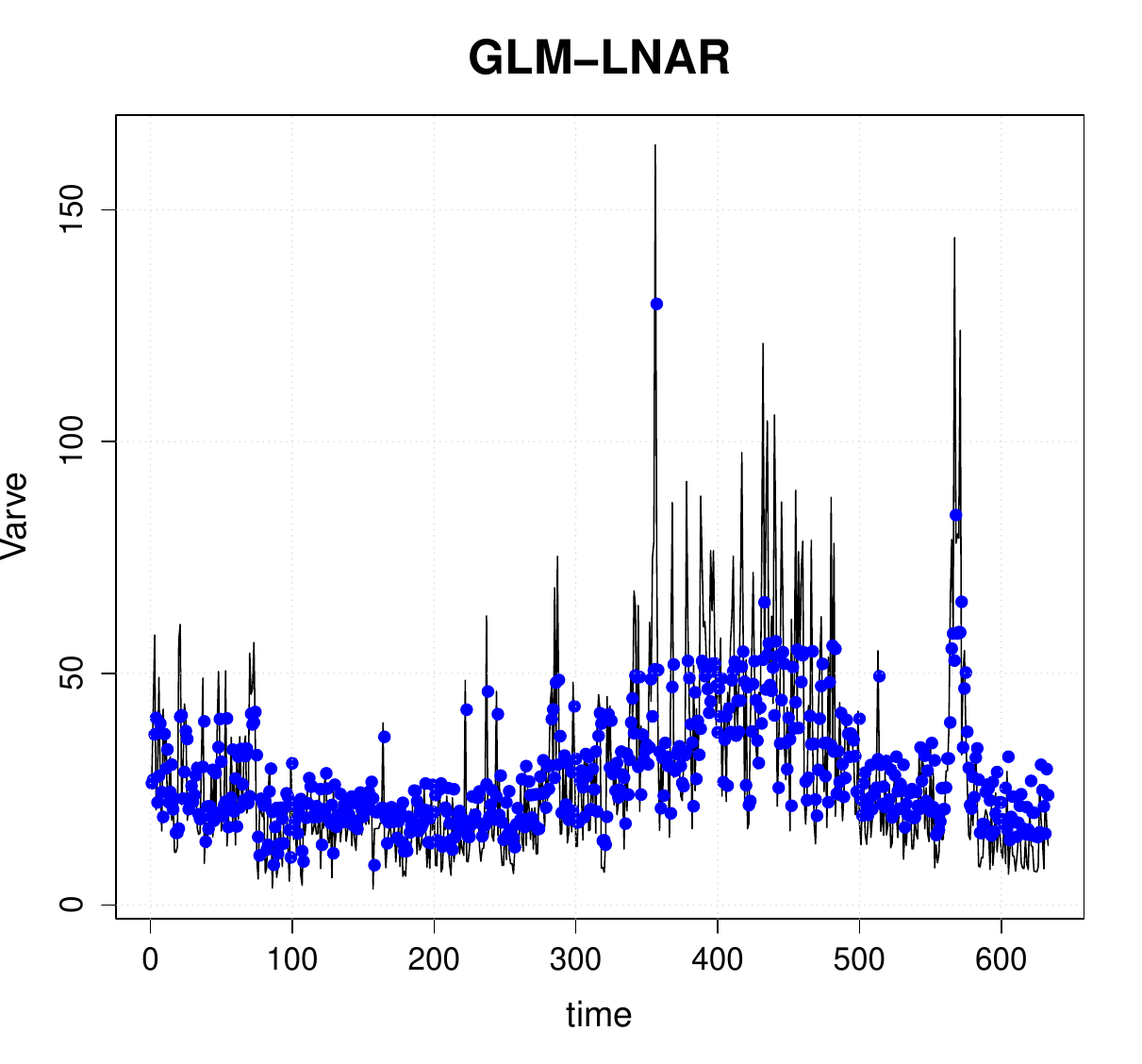}
	\includegraphics[width = .5\linewidth]{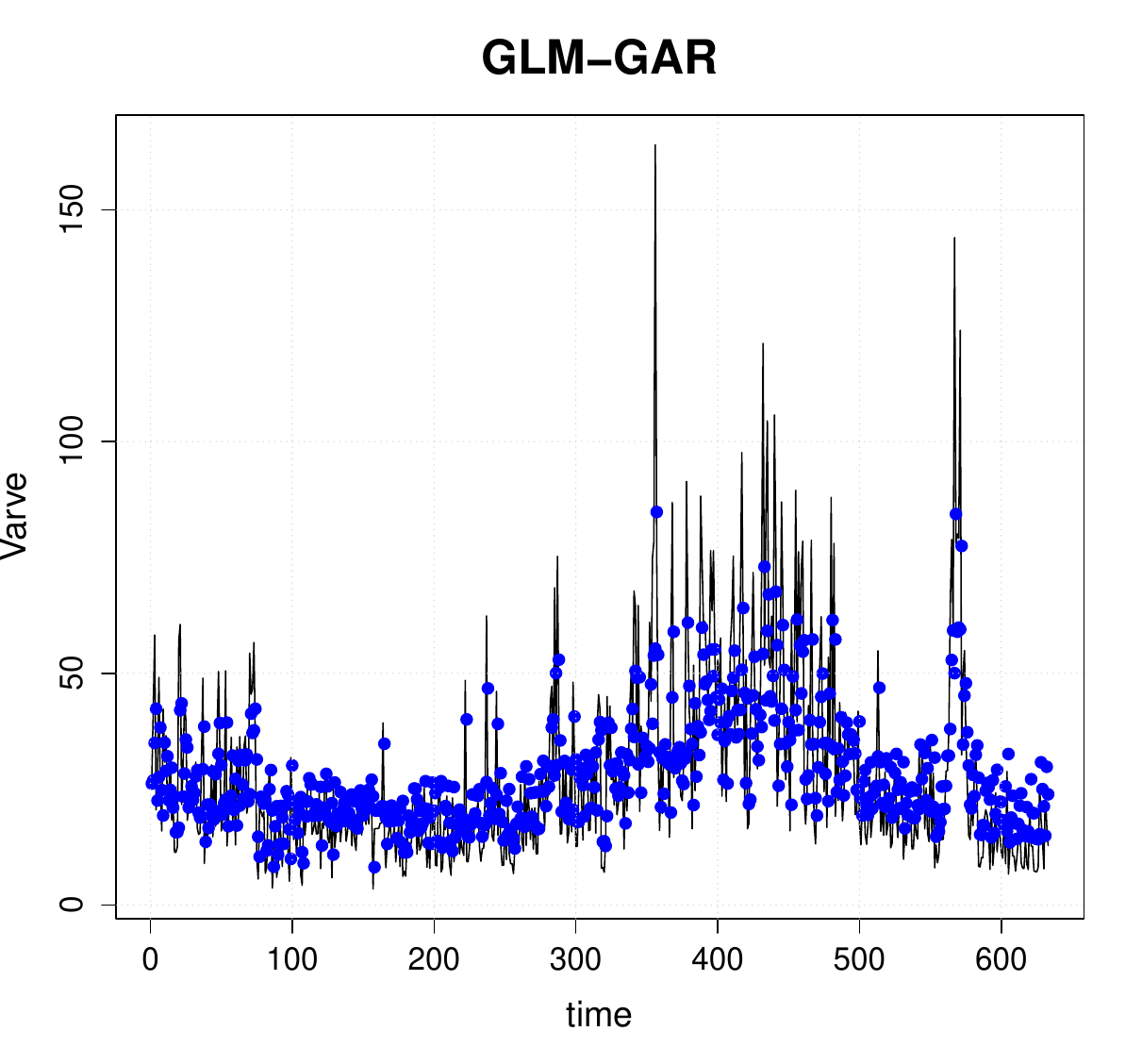}
	\caption{In-sample forecasting (dots) for the varve time series based on GLM, GLM-LNAR, GLM-GAR, and GLM-ARCH along with the observed time series (solid line).}\label{fig:varve_pred}
\end{figure}

\section{Concluding remarks}\label{sec:conclusion}

We proposed a flexible class of generalised linear models driven by latent processes for modeling count, real-valued, continuous, and positive continuous time series. By assuming that ${Y_t}$, conditional on ${\nu_t}$, follows a bi-parameter exponential family with a multiplicative latent effect in the conditional mean, our formulation overcomes some limitations of existing approaches and accommodates important cases, such as gamma-distributed positive time series, that are not feasible under the additive latent structure of \cite{davwu2009}. The inclusion of an estimable dispersion parameter and the absence of a strong restriction condition on the link function $h(\cdot)$ further enhance model flexibility.

We established the asymptotic normality of the GLM estimators using the Central Limit Theorem for strongly mixing processes by \cite{pelute1997} and derived the corresponding information matrix for inference. We also developed prediction methods for GLMs with latent processes, addressing a gap in the literature. Empirical illustrations showed that alternative latent dynamics, such as a gamma AR(1) process, may provide improved performance over the commonly used log-normal AR(1). Overall, the proposed framework offers a versatile and theoretically sound basis for modelling time series with latent dynamic structure. 

\section*{Acknowledgements}

\noindent WBS thanks the Department of Biostatistics from City University of Hong Kong for the warm hospitality during his research visit, where part of this work was developed.

\end{document}